\documentclass[sigconf]{acmart}

\fancyhead{}

\usepackage{balance}

\settopmatter{printacmref=false} 
\renewcommand\footnotetextcopyrightpermission[1]{} 

\usepackage{enumitem}
\usepackage{booktabs} 

\renewcommand{\C}{\mathcal{C}}
\renewcommand{\P}{\mathrm{P}}
\newcommand{\E}{\mathbb{E}}
\newcommand{\btheta}{\boldsymbol{\theta}}

\usepackage{amsmath}
\DeclareMathOperator*{\argmax}{arg\,max}

\newtheorem{prop}{Proposition}

\usepackage[ruled,vlined]{algorithm2e}

\usepackage{caption}
\usepackage{subcaption}


\copyrightyear{2019}
\acmYear{2019} 
\setcopyright{iw3c2w3}
\acmConference[WWW '19]{Proceedings of the 2019 World Wide Web Conference}{May 13--17, 2019}{San Francisco, CA, USA}
\acmBooktitle{Proceedings of the 2019 World Wide Web Conference (WWW '19), May 13--17, 2019, San Francisco, CA, USA}
\acmPrice{}
\acmDOI{10.1145/3308558.3313429}
\acmISBN{978-1-4503-6674-8/19/05}

\begin{document}
\title[Community Detection through Likelihood Optimization: In Search of a Sound Model]{Community Detection through Likelihood Optimization: \\ In Search of a Sound Model}
\titlenote{This is a preprint for the paper published in Proceedings of the 2019 World Wide Web Conference (WWW'19) under the Creative Commons Attribution 4.0 International (CC-BY 4.0) license. This version contains a few additional technical details.}

\author{Liudmila Prokhorenkova}
\affiliation{%
 \institution{Moscow Institute of Physics and Technology}
  \institution{Yandex}
  \city{Moscow}
  \country{Russia}
}
\email{ostroumova-la@yandex.ru}

\author{Alexey Tikhonov}
\affiliation{%
  \institution{Yandex}
  \city{Berlin}
  \country{Germany}
}
\email{altsoph@gmail.com}


\begin{abstract}
Community detection is one of the most important problems in network analysis. Among many algorithms proposed for this task, methods based on statistical inference are of particular interest: they are mathematically sound and were shown to provide partitions of good quality. Statistical inference methods are based on fitting some random graph model  (a.k.a. \textit{null model}) to the observed network by maximizing the likelihood. The choice of this model is extremely important and is the main focus of the current study. We provide an extensive theoretical and empirical analysis to compare several models: the widely used planted partition model, recently proposed degree-corrected modification of this model, and a new null model having some desirable statistical properties. We also develop and compare two likelihood optimization algorithms suitable for the models under consideration. An extensive empirical analysis on a variety of datasets shows, in particular, that the new model is the best one for describing most of the considered real-world complex networks according to the likelihood of observed graph structures. 
\end{abstract}

%
%

\keywords{Community detection; likelihood optimization; statistical inference; planted partition model; LFR benchmark}

\maketitle

\section{Introduction}

Among various properties shared by many real-world complex networks, community structure is extremely important. It is characterized by the presence of highly interconnected groups of vertices (\textit{communities} or \textit{clusters}) relatively well separated from the rest of the network. In social networks communities are formed by users with similar interests, in citation networks they represent papers on related topics, etc. The presence of communities highly affects, e.g., the promotion of products via viral marketing, the spreading of infectious diseases,
computer viruses and information, and so on. 
Being able to identify communities in a network could help us to exploit this network more effectively: find similar scientific papers, discover users with similar interests for targeted advertisement, compress or visualize the network, etc.

Among other algorithms proposed for community detection, the notable ones are methods based on statistical inference. In such methods, some underlying random graph model is assumed, the evidence is represented by the observed graph, and hidden variables to be inferred are the parameters of the model together with community assignments. Such methods are appealing since they are theoretically sound and consistent: e.g., it has been proved that when the maximum-likelihood method is applied to networks generated from the same stochastic block model, it returns correct cluster assignments in the limit of large degrees~\citep{bickel2009nonparametric}. Also, likelihood can be used to formalize the notion of a community~\cite{copic2009identifying}. 

The choice of the proper null model is essential for statistical inference algorithms as it highly affects their performance. There are two types of models: with fixed or non-fixed number of parameters. Direct likelihood optimization for models with non-fixed number of parameters leads to trivial partitions, as discussed in Section~\ref{sec:likelihood_method}, and additional heuristics are needed to define the number of clusters. Hence, in this paper we focus on models with fixed number of parameters.  We analyze strengths and weaknesses of two most popular models~--- planted partition model and its degree-corrected variant (they are discussed in Section~\ref{sec:null_models}). We also propose a new model, which has only one parameter, satisfies a desired statistical property of preserving the expected degree sequence, and shows the best fit for a variety of real-worlds networks according to the likelihood of observed structures. 

Our research is similar in spirit to the recent paper by~\citet{yang2015defining}, where the authors provided an extensive comparison of various community scoring functions. In contrast, we focus solely on the likelihood optimization methods and, particularly, on the comparison of the null models. Note that such models allow to score partitions on the global level instead of individual communities. To the best of our knowledge, we are the first to perform such thorough evaluation of the null models used for community detection.

To sum up, the contributions of this paper are the following:
\begin{itemize}[noitemsep,nolistsep]
\item We theoretically analyze two well-known null models used for community detection. Based on this analysis, we propose a new one-parameter model which has a desirable statistical  property of preserving the expected degree sequence.
\item We empirically demonstrate that the new model gives the best fit to a variety of real-world complex networks. 
\item We show that the likelihood based on the proposed model can be used as a new, more reliable, alternative to the well-known modularity measure.
\item We develop several likelihood optimization algorithms suitable for all discussed models. We perform an extensive empirical comparison of all the algorithms on various synthetic and real-world datasets. 
\item The implementation of the proposed algorithms is available as an open-source library.\footnote{\url{https://github.com/altsoph/community_loglike}}
\end{itemize}

The rest of the paper is organized as follows. In the next section we overview related papers and introduce the required background. Then, in Section~\ref{sec:null_models}, we formally define and analyze all null models used in the current research. In Section~\ref{sec:likelihood_optimization}, we discuss the method of statistical inference and present the likelihoods for all models. The proposed likelihood optimization algorithms are discussed in Section~\ref{sec:optimization}. Section~\ref{sec:experiments} provides an extensive empirical analysis of all models and algorithms and Section~\ref{sec:conclusion} concludes the paper.

\section{Background and related work}\label{sec:related_work}

\subsection{Community Detection Methods}

In this section, we briefly overview algorithms used for community detection. However, most of the related work will be discussed in the corresponding background sections: Section~\ref{sec:modularity} discusses modularity measure; Section~\ref{sec:LFR} defines the LFR bechmark; Sections~\ref{sec:SBM} and~\ref{sec:DCSBM} describe standard null models; Section~\ref{sec:inference_related} discusses papers on statistical inference; Section~\ref{sec:optimization_related} deals with some optimization techniques. For a deeper overview of community detection area, we refer the reader to several survey papers, e.g.,~\citep{chakraborty2017metrics,coscia2011classification,fortunato2010community,fortunato2016community,malliaros2013clustering}.

The problem of community detection has recently attracted a lot of attention and many algorithms were proposed to solve this problem. The proposed methods include: spectral algorithms~\citep{von2007tutorial};  methods based on statistical inference (see Section~\ref{sec:likelihood_optimization}); methods based on optimization (see Section~\ref{sec:optimization}); methods based on dynamical processes on networks, like diffusion, spin dynamics, synchronization, and so on.  Also, existing approaches differ by a particular task at hand: detecting overlapping communities~\citep{yang2014structure}, clustering in dynamic~\citep{spiliopoulou2011evolution}, directed~\citep{rosvall2008maps}, or weighted~\citep{aicher2014learning} networks, community detection in annotated~\citep{newman2016structure} or layered~\citep{peixoto2015inferring} networks, and so on. 

In this paper, we focus on the basic problem of detecting non-overlapping communities in undirected and unweighted graphs. However, ideas discussed in this paper can be applied to more general cases, since there exist the corresponding generalizations of the null models. We leave this analysis for the future research, since there is a lot to be understood even for the basic case.

\subsection{Modularity}\label{sec:modularity}

\begin{small}
\begin{table}
  \caption{Notation}
  \label{tab:notation}
  \begin{tabular}{cp{5.1cm}}
    \toprule
    Variable&Description\\
    \midrule
    $G$ & observed graph, simple and undirected \\
    $V(G)$ & set of vertices \\
    $E(G)$ & set of edges \\
    $n$ = |V(G)| & number of vertices in $G$ \\
    $m = |E(G)|$ & number of edges in $G$ \\
    $\C = \{C_1, \ldots, C_k\}$ & partition of $V(G)$ \\
    $k = |\C|$ & number of communities \\
    $Q_{0}(\C,G,\gamma)$ & simple modularity defined in~\eqref{eq:simple_modularity} \\
    $Q_{1}(\C,G,\gamma)$ & standard modularity defined in~\eqref{eq:modularity} \\
    $c(i)$ & cluster assignment of vertex $i$ \\
    $m_{in} = m_{in}(\C,G)$ & number of intra-community edges \\ 
    $m_{out} = m_{out}(\C,G)$ & number of inter-community edges \\
    $P = \binom{n}{2}$ & number of pairs of vertices \\
    $P_{in} = \sum_{C \in \C} \binom{|C|}{2}$ & number of intra-community pairs of vertices \\
    $P_{out} = P - P_{in}$ & number of inter-community pairs of vertices \\
    $d(i)$ & degree of vertex $i$ in $G$ \\
    $d_{in}(i)$ & number edges going from $i$ to vertices in $C_{c(i)}$ \\
    $d_{out}(i) = d - d_{in}(i)$ & number of edges going from $i$ to outside $C_{c(i)}$ \\
    $D(C) = \sum_{i \in C} d(i)$ & degree of cluster $C$ \\
    $D_{in}(C) = \sum_{i \in C} d_{in}(i)$ & twice the number of edges induced by $C$ \\
    $m(C_q,C_r)$ & number of edges between $C_q$ and $C_r$ or twice the number of intra-cluster edges if $q=r$ \\
    $e(i,j)$ & number of edges between $i$ and $j$  \\
  \bottomrule
\end{tabular}
\end{table}
\end{small}

Throughout this paper we assume that we are given a simple undirected graph $G = (V(G),E(G))$ and the task is to find a partition $\C = \{C_1, \ldots, C_k\}$ of its vertex set $V(G)$ such that $\bigcup C_i = V(G)$ and $C_i \bigcap C_j = \emptyset$ for $i \neq j$. The notation used in the paper is summarized in Table~\ref{tab:notation}.

For various problems related to community detection it is extremely important to have a quality function $\bar Q(\C,G)$ which measures a goodness of a partition $\C$ for a graph $G$. Having such a function, one can:
\begin{itemize}[noitemsep,nolistsep]
\item Optimize $\bar Q(\C,G)$ to detect good communities according to this measure;
\item Use $\bar Q(\C,G)$ as a stopping criteria for a community detection algorithm (e.g., to choose the partition level in hierarchical algorithms);
\item Tune parameters of a parametric community detection algorithm;
\item Compare several community detection algorithms when no ground truth partition is available;
\item Compare several candidate partitions of $G$.
\end{itemize}

The most well-known such measure is 
\textit{modularity} which was first introduced by Newman and Girvan~\cite{newman2004finding}. 
The basic idea is to consider the fraction of intra-community edges among all edges of $G$ and penalize it in order to avoid trivial partitions like $\C = \{V(G)\}$. In its general form, modularity is
$$
Q(\C,G,\gamma) = \frac{1}{m} \left( m_{in} - \gamma \E (m'_{in}) \right)\,,
$$
where $m = m(G) = |E(G)|$ is the number of edges in $G$; $m_{in} = m_{in}(\C,G)$ is the number of intra-community edges; $m'_{in} = m'_{in}(\C)$ is a (random) number of intra-community edges in a graph constructed according to some underlying random graph model; $\gamma$ is the \textit{resolution parameter}~\cite{reichardt2006statistical}, which is usually added to get the possibility of varying the number of communities obtained after maximizing $Q(\C,G,\gamma)$.

The simplest underlying model is the Erd{\H{o}}s--R{\'e}nyi random graph, i.e., we assume that $G$ is sampled uniformly at random from the set of all graphs with $n$ vertices and $m$ edges. For this model, 
$\E (m'_{in}) =P_{in} \cdot \frac{m}{P}$,
where $P = \binom{n}{2}$ is the number of pairs of vertices and $P_{in} = \sum_{C \in \C} \binom{|C|}{2}$ is the number of intra-community pairs. So, the formula for \textit{simple modularity} is
\begin{equation}\label{eq:simple_modularity}
Q_{0}(\C,G,\gamma) = \frac{1}{m} \left( m_{in} - \gamma\frac{P_{in} m}{P} \right)\,.
\end{equation}

However, the Erd{\H{o}}s--R{\'e}nyi random graph model is known to be not a good descriptor of real-world networks since its Poisson degree distribution significantly differs from heavy-tailed degree distributions observed in real-world networks. A standard solution is to consider a random graph with a given expected degree sequence~\cite{chung2002connected} as a null model, and to take the degree sequence from the observed graph.\footnote{At large scale, this model is essentially equivalent to the configuration model~\cite{molloy1995critical}: a random graph sampled uniformly from the set of graphs with a given degree sequence.} For this model, the probability that vertices $i$ and $j$ are connected equals $\frac{d(i)d(j)}{2m}$, where $d(i)$ is the degree of a vertex $i$.\footnote{For technical purposes, it is more convenient (and conventional) to allow the underlying model to have loops and multiple edges and to assume that the number of edges between vertices $i$ and $j$ follows a Poisson distribution with the mean $\frac{d(i)d(j)}{2m}$ if $i \neq i$ and the expected number of loops for a vertex $i$ is $\frac{d(i)^2}{4m}$. In particular, in this case we avoid problems when $d(i)d(j) > 2m$ which can possibly occur. Note that usually $d(i)d(j) \ll 2m$, so multiple edges rarely appear.} In this case we have
$$
\E m'_{in} = \frac{1}{2m} \sum_{C \in \C} \frac{1}{2} \sum_{i \in C} \sum_{j \in C} d(i)d(j) = \frac{1}{4m} \sum_{C \in \C} D(C)^2,
$$
where $D(C) = \sum_{i \in C} d(i)$ is the degree of cluster $C$. So, the standard expression for \textit{modularity} is
\begin{equation}\label{eq:modularity}
Q_{1}(\C,G,\gamma) = \frac{m_{in}}{m} -  \frac{\gamma}{4 m^2} \sum_{C \in \C} D(C)^2\,.
\end{equation}
Originally, modularity was introduced with $\gamma = 1$, but is was proven to have a resolution limit~\cite{fortunato2007resolution}, i.e., it fails to detect communities smaller than a certain size; varying the resolution parameter helps to overcome this problem, but tuning $\gamma$ is challenging.

Despite some known shortcomings~\citep{fortunato2010community}, modularity remains the most popular way to measure the quality of a partition and many algorithms are based on direct modularity optimization (see Section~\ref{sec:optimization}). Also, as we discuss in Section~\ref{sec:likelihood_optimization}, under certain assumptions, modularity maximization is equivalent to likelihood optimization~\cite{newman2016community}.
In this paper we also propose a new quality function based on a more consistent one-parametric null model.

\subsection{LFR Model}\label{sec:LFR}

Let us now describe the LFR (Lancichinetti-Fortunato-Radicchi) model~\cite{lancichinetti2008benchmark}, which is the most widely used synthetic benchmark for comparison of community detection algorithms.

LFR generates a graph $G$ on $n$ vertices. The main parameters of the model are: the exponent of the power-law degree distribution~$\gamma$, the exponent of the power-law community size distribution $\beta$,  the average degree $\bar d$, and mixing parameter $\hat\mu$.
First, we generate the degrees of vertices by sampling them independently from the power-law distribution with exponent $\gamma$ and mean $\bar d$. Each vertex shares a fraction $1-\hat\mu$ of its edges with the vertices of its community and a fraction $\hat\mu$ with the other vertices of the network. The sizes of the communities are sampled from the power-law distribution with exponent $\beta$, such that the sum of all sizes equals $n$. Then, vertices are assigned to communities such that the internal degree of any vertex is less than the size of its community. Finally, the configuration model~\citep{molloy1995critical} with rewiring steps is used to construct a graph with the given degree sequence and with the required fraction of internal edges. See~\citep{lancichinetti2008benchmark} for the detailed description of this procedure.

\section{Null Models}\label{sec:null_models}

In this section, we analyze several random graph models having community structure and suitable for using as null models for likelihood optimization. 

\subsection{Stochastic Block Model}\label{sec:SBM}

The most well-known random graph model with community structure is the \textit{stochastic block model} (SBM)~\cite{holland1983stochastic}. In this model, the vertices are divided in $k$ clusters and for each pair of vertices $i,j$ we draw an edge between them with probability $p_{c(i),c(j)}$ independently of all other edges; here $c(i)$ is a community assignment for a vertex $i$. 
In other words, the probability of an edge between two vertices depends only on their community assignments.
The values $p_{q,r}$, $1 \le q,r \le k$ are parameters of the model, $0 \le p_{q,r} \le 1$ and $p_{q,r} = p_{r,q}$. 
The matrix of the probabilities $p_{q,r}$ is called the stochastic block matrix. If the diagonal elements are larger than the other ones, then generated graphs have community structure. 

In many community detection algorithms a simplified version of SBM called \textit{planted partition model} (PPM) is used~\cite{condon2001algorithms}. PPM is a special case of SBM, where $p_{q,q} = p_{in}$ for $1 \le q \le k$ and $p_{q,r} = p_{out}$  for $1 \le q,r \le k, q\neq r$, $p_{in}$ and $p_{out}$ are parameters of PPM. 

\subsection{Degree-Corrected Stochastic Block Model}\label{sec:DCSBM}

Similarly to the Erd{\H{o}}s--R{\'e}nyi random graph, SBM is unable to model heavy-tailed degree distributions, so community detection algorithms based on this model may have a poor quality. To overcome this issue, Karrer and Newman proposed the \textit{degree-corrected stochastic block model} (DCSBM)~\citep{karrer2011stochastic}. In this model the vertices are again assigned to $k$ clusters and edges are placed independently at random. The number of edges between vertices $i$ and $j$ follows a Poisson distribution with the mean $\frac{d(i)d(j)}{2m}p_{c(i),c(j)}$ or a half that number for self-loops. 
Let us show that DCSBM is able to generate graphs with desired expected degrees.
\begin{prop}\label{prop:positive}
There exist such $p_{q,r}$ that in DCPPM we have $\E (d'(i)) = d(i)$ for all~$i$.
\end{prop}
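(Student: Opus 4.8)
The plan is to compute $\E(d'(i))$ directly from the Poisson means defining the model and then to exhibit an explicit choice of the block parameters that makes it equal $d(i)$. First I would write the degree of $i$ in the generated graph as a sum over its incident edges, counting each self-loop twice. By the model's definition, for $j \neq i$ the expected number of edges between $i$ and $j$ is $\frac{d(i)d(j)}{2m}p_{c(i),c(j)}$, while the expected number of self-loops at $i$ is $\frac{d(i)^2}{4m}p_{c(i),c(i)}$, each raising the degree by $2$. Adding these up gives
$$
\E(d'(i)) = \sum_{j \neq i}\frac{d(i)d(j)}{2m}p_{c(i),c(j)} + 2\cdot\frac{d(i)^2}{4m}p_{c(i),c(i)} = \frac{d(i)}{2m}\sum_{j}d(j)\,p_{c(i),c(j)}\,,
$$
where the factor $2$ in front of the self-loop term is precisely what is needed to fold the $j=i$ contribution back into a single sum over all $j$.

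Next I would rewrite the requirement $\E(d'(i)) = d(i)$ in a cluster-aggregated form. Grouping the vertices $j$ by community and using $D(C_r)=\sum_{j\in C_r}d(j)$, the condition becomes $\sum_{r}D(C_r)\,p_{q,r} = 2m$ for every cluster $q=c(i)$; note that it depends on $i$ only through its cluster, so it is really one linear equation per community. Since $\sum_r D(C_r) = \sum_j d(j) = 2m$, the constant choice $p_{q,r}=1$ for all $q,r$ satisfies every one of these equations simultaneously, which establishes existence. In the planted-partition specialization $p_{q,q}=p_{in}$, $p_{q,r}=p_{out}$, this reads $p_{in}D(C_q)+p_{out}\bigl(2m-D(C_q)\bigr)=2m$, again solved by $p_{in}=p_{out}=1$.

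I expect the only genuine subtlety to be the self-loop bookkeeping in the first step: one must combine the halved Poisson mean for loops with the factor $2$ by which a loop raises the degree, and verify that these merge so that the $j=i$ term matches the generic $j\neq i$ summand. Everything after that is a one-line observation about a linear system. It is worth remarking that the solution found is not a mere technical artifact: $p_{q,r}\equiv 1$ reduces DCSBM to the plain configuration model, which is degree-preserving by construction, so the proposition essentially records that the degree-matched special case of the model is exactly this configuration model. Whether a \emph{non-trivial} solution with $p_{in}>p_{out}$ exists in the DCPPM variant hinges on whether the cluster degrees $D(C_q)$ are all equal: subtracting the equations for two clusters $q,q'$ yields $(p_{in}-p_{out})\bigl(D(C_q)-D(C_{q'})\bigr)=0$, so unequal cluster degrees force $p_{in}=p_{out}=1$ and the $k$ constraints over-determine the two parameters — a point that motivates the comparison of null models pursued in the rest of the section.
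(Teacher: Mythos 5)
Your opening computation is correct, and in fact more careful than the paper's own proof on one point: the paper writes $\E(d'(i)) = \sum_{j=1}^n \frac{d(i)d(j)}{2m}p_{c(i),c(j)}$ without comment, whereas you verify that the halved Poisson mean for self-loops and the factor $2$ they contribute to the degree merge into the generic summand. Your reduction of degree preservation to the linear system $\sum_r D(C_r)\,p_{q,r} = 2m$ (one equation per block) is also correct, and as a proof of the literal existence claim your argument goes through. The problem is the witness you pick. The choice $p_{q,r}\equiv 1$ is exactly the degenerate point that this pair of propositions is designed to quarantine: it collapses DCSBM to the configuration model with no community structure at all, and it is equally available in DCPPM (take $p_{in}=p_{out}=1$), so by your argument the proposition would hold verbatim for DCPPM too. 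That erases the intended contrast with Proposition~\ref{prop:negative}, whose content is precisely that DCPPM can preserve expected degrees \emph{only} by degenerating to $p_{in}=p_{out}=1$, ``the standard configuration model without any community structure.'' The paper instead exhibits the maximum-likelihood parameters
$$
p_{q,r} = \frac{2\,m\,m(C_q,C_r)}{D(C_q)\,D(C_r)}\,,
$$
and checks, using $\sum_{C\in\C} m(C,C_q) = D(C_q)$, that
$$
\E(d'(i)) = \frac{d(i)}{D\big(C_{c(i)}\big)}\sum_{C\in\C} m\big(C,C_{c(i)}\big) = d(i)\,.
$$
These parameters depend on the observed block densities (diagonal entries are large for assortative partitions), so the proposition as proved says something substantive: DCSBM can carry genuine community structure \emph{and} preserve the expected degree sequence, which is what makes it a meaningful foil for DCPPM and motivates ILFR.

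Your closing remark --- that the proposition ``essentially records that the degree-matched special case of the model is exactly this configuration model'' --- is therefore not right, and your own linear system shows why: there are $\binom{k+1}{2}$ unknowns and only $k$ constraints, so for $k\ge 2$ the degree-preserving parameter sets form an affine family of positive dimension, of which the configuration model is one point and the maximum-likelihood matrix above is another, generically nontrivial, point. The repair is small: keep your first two steps, but exhibit the ML parameters (or any nontrivial solution of your system) as the witness rather than the all-ones matrix.
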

\begin{proof}
Let us set
$$
p_{q,r} = \frac{2\,m\,m(C_q,C_r)}{D(C_q)\, D(C_r)}\,,
$$
where $m(C_q,C_r)$ is the number of edges between clusters $C_q$ and $C_r$ or twice the number of intra-cluster edges if $q=r$. 
Note that these parameters maximize the likelihood for DCSBM~\citep{karrer2011stochastic}.
For such parameters we get the desired condition:
\begin{multline*}
\E (d'(i)) = \sum_{j=1}^n \frac{d(i)d(j)}{2m}p_{c(i),c(j)} \\
= \frac{d(i)}{D\big(C_{c(i)}\big)} \sum_{C\in\C}\frac{m\big(C,C_{c(i)}\big)}{D(C)} \sum_{j \in C} d(j) 
= d(i)\,.
\end{multline*}
So, we proved that in its general form DCSBM is able to preserve the desired degree sequence.
\end{proof}

The \textit{degree-corrected planted partition model} (DCPPM)~\cite{newman2016community} is a simplified version of DCSBM, where $p_{q,q} = p_{in}$ for $1 \le q \le k$ and $p_{q,r} = p_{out}$  for $1 \le q,r \le k, q\neq r$. Let us prove the following ``negative'' proposition.
\begin{prop}\label{prop:negative}
If $p_{in} \neq p_{out}$ and for some $q$, $r$ we have $D(C_q) \neq D(C_r)$, then $\E (d'(i)) \neq d(i)$ for some $i$.
\end{prop}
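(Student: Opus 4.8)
The plan is to prove the statement by a direct computation that exposes how $\E(d'(i))$ depends on the community of $i$, followed by a short contradiction argument. I would start from the same general expression for the expected degree that was used in the proof of Proposition~\ref{prop:positive}, namely $\E(d'(i)) = \sum_{j=1}^n \frac{d(i)d(j)}{2m}\,p_{c(i),c(j)}$, and specialize it to DCPPM. There the coefficient $p_{c(i),c(j)}$ takes only the two values $p_{in}$ (when $j$ lies in the community $C := C_{c(i)}$ of $i$) and $p_{out}$ (otherwise), so splitting the sum over $j$ according to whether $j \in C$, and using $\sum_{j \in C} d(j) = D(C)$ together with $\sum_{j=1}^n d(j) = 2m$, collapses the whole expression into a formula depending on $i$ only through $d(i)$ and through the single scalar $D(C)$.

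Carrying out this split gives
\[
\E(d'(i)) = \frac{d(i)}{2m}\Big( p_{in}\,D(C) + p_{out}\,(2m - D(C)) \Big) = \frac{d(i)}{2m}\Big( (p_{in}-p_{out})\,D(C) + 2m\,p_{out} \Big).
\]
For a vertex $i$ with $d(i) \neq 0$, the condition $\E(d'(i)) = d(i)$ is therefore equivalent to $(p_{in}-p_{out})\,D(C) + 2m\,p_{out} = 2m$, i.e. to
\[
D\big(C_{c(i)}\big) = \frac{2m\,(1-p_{out})}{\,p_{in}-p_{out}\,} =: \kappa.
\]
The assumption $p_{in} \neq p_{out}$ enters precisely here: it makes the coefficient $p_{in}-p_{out}$ nonzero, so $\E(d'(i))$ genuinely depends on $D(C)$ and the target value $\kappa$ is a well-defined constant that does not depend on $i$.

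To conclude, I would argue by contradiction: if $\E(d'(i)) = d(i)$ held for \emph{every} vertex $i$, then the displayed identity would force $D\big(C_{c(i)}\big) = \kappa$ for every vertex $i$ of positive degree, so every community containing a non-isolated vertex would share the same cluster degree $\kappa$; this contradicts $D(C_q) \neq D(C_r)$. The one point that needs care, and which I expect to be the only real obstacle, is exhibiting a \emph{concrete} witness vertex at which equality fails, because a community made up entirely of isolated vertices has $D(C)=0$ yet satisfies $\E(d'(i)) = 0 = d(i)$ trivially and so carries no information. I would resolve this by choosing the witness in whichever of $C_q,C_r$ has cluster degree different from $\kappa$: since $D(C_q)\neq D(C_r)$, at most one of them can equal $\kappa$, and the remaining community $C^\ast$ satisfies $D(C^\ast)\neq\kappa$; picking any $i\in C^\ast$ with $d(i)>0$ then makes $\E(d'(i)) \neq d(i)$. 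This is clean in the generic situation (no all-isolated communities), which is tacitly the intended setting; I would flag the degenerate all-isolated case explicitly rather than sweep it under the rug.
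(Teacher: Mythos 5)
Your proof is correct and takes essentially the same route as the paper's: the same split of $\E(d'(i))$ into within- and between-community sums, the same reduction to the linear constraint $(p_{in}-p_{out})\,D\big(C_{c(i)}\big) + 2m\,p_{out} = 2m$, and the same observation that, when $p_{in}\neq p_{out}$, two distinct cluster degrees cannot both satisfy it (the paper packages the contradiction as forcing $p_{in}=p_{out}=1$, you as a uniquely determined constant $\kappa$; these are contrapositive phrasings of one argument). The isolated-vertex caveat you flag is genuine, and you treat it more carefully than the paper does: the paper's proof asserts the constraint must hold for \emph{any} cluster, but a cluster of isolated vertices imposes no constraint at all, and the proposition as stated actually fails in that degenerate case --- e.g., with $p_{in}=1$, $p_{out}=\tfrac{1}{2}$, one cluster containing all edges (so $D(C)=2m=\kappa$) and a second cluster of isolated vertices, one gets $\E(d'(i))=d(i)$ for every $i$ even though the hypotheses hold. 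So your explicit exclusion of all-isolated communities is the correct reading of the tacit assumption, not an optional nicety.
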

\begin{proof}
Let us compute the expected degree of $i$:

\begin{multline*}
\E (d'(i)) 
=  \frac{d(i)}{2m}p_{in} \sum_{j \in C_{c(i)}}  d(j) 
+   \frac{d(i)}{2m}p_{out}  \sum_{C \neq  C_{c(i)}}\sum_{j \in C} d(j) \\
=  \frac{d(i)}{2m} \left(  D\big(C_{c(i)}\big) (p_{in}-p_{out}) + 2m\,p_{out} \right)\,.
\end{multline*}
To get $\E (d'(i)) = d(i)$ for all $i$, we need to have, for \textit{any} cluster $C$,
$$
D(C) (p_{in}-p_{out}) + 2m\,p_{out} = 2m\,.
$$
As we have at least two clusters with different values $D(C)$, we have to take 
$
p_{in} = p_{out} = 1\,,
$
which leads to the standard configuration model without any community structure and contradicts the statement of the proposition. 
\end{proof}
As a result, we obtain that the standard method of limiting the number of parameters in DCSBM leads to inability of the obtained model to preserve the expected degree sequence.

\subsection{Independent LFR Model}

Motivated by Proposition~\ref{prop:negative}, we developed a one-parameter model which preserves the expected degree sequence.  It is a special case of DCSBM and we call it \textit{independent LFR model} (ILFR) due to its analogy to the LFR benchmark. 

The core problem of DCPPM is the assumption that the probability of an internal edge is independent of the size of its community, so the fraction of internal edges for a vertex \textit{depends on the community size}. On the contrary, we propose using the mixing parameter $\mu$ to control this fraction. Namely, we consider DCSBM with the following expected number of edges between two vertices $i$ and $j$: 
$$
\frac{\mu d(i)d(j)}{2m}\,\,\,\text{ if }c(i) \neq c(j)\,,
$$ 
$$
\frac{(1-\mu) d(i)d(j)}{D(C_{c(i)})} + \frac{\mu d(i)d(j)}{2m}\,\,\,\text{ if }c(i) = c(j)
$$ 
(or half this number for the self loops). Note that making the probability of an internal edge dependent on the size of the community is very natural: e.g., if a community of people is very small, one would expect that its members are much closer related to each other than members of large communities.
The following proposition holds for the proposed model.
\begin{prop}\label{prop:positice}
In ILFR we have $\E (d'(i)) = d(i)$ for all $i$.
\end{prop}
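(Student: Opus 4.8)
The plan is to compute $\E(d'(i))$ directly from the definition, in exactly the same manner as the proof of Proposition~\ref{prop:positive}, and to show that the contributions depending on the community size cancel. First I would write the expected degree as a sum over all vertices $j$ of the expected number of edges between $i$ and $j$, adopting the same self-loop convention used earlier: the factor one-half for a loop, together with the fact that a loop adds two to the degree, makes the term for $j=i$ come out correctly, so that $\E(d'(i)) = \sum_{j=1}^{n}\E\,e'(i,j)$ with no special treatment of the diagonal needed.

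Next I would split the sum according to whether $j$ lies in $C_{c(i)}$ or not. Writing $C = C_{c(i)}$, the vertices inside $C$ contribute the full expression $\frac{(1-\mu)d(i)d(j)}{D(C)} + \frac{\mu d(i)d(j)}{2m}$, while the vertices outside contribute only $\frac{\mu d(i)d(j)}{2m}$. The two elementary identities I would rely on are $\sum_{j\in C} d(j) = D(C)$ and $\sum_{j=1}^{n} d(j) = 2m$. Applying the first to the $(1-\mu)$ term collapses $\frac{(1-\mu)d(i)}{D(C)}\sum_{j\in C}d(j)$ to exactly $(1-\mu)d(i)$, the decisive feature being that the normalizing factor $D(C)$ is cancelled by the degree sum over the cluster.

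For the $\mu$-terms I would observe that they are present for every $j$, both inside and outside $C$, so that summing over all $j$ gives $\frac{\mu d(i)}{2m}\sum_{j=1}^{n} d(j) = \frac{\mu d(i)}{2m}\cdot 2m = \mu d(i)$. Adding the two contributions then yields $(1-\mu)d(i) + \mu d(i) = d(i)$, as claimed. I do not anticipate a genuine obstacle: in contrast to DCPPM, the term controlling the internal edges is normalized by $D(C)$ rather than by $2m$, which is precisely what makes the community-size dependence disappear, and the only point requiring care is the bookkeeping of the self-loop term so that the sum may legitimately be taken over all $j$ including $j=i$.
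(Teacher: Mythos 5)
Your proposal is correct and follows essentially the same route as the paper's proof: splitting the expectation into the $(1-\mu)$ term summed over $C_{c(i)}$, which collapses via $\sum_{j\in C_{c(i)}} d(j) = D(C_{c(i)})$, and the $\mu$ term summed over all of $V(G)$, which collapses via $\sum_j d(j) = 2m$. The only difference is that you spell out the self-loop bookkeeping explicitly, which the paper leaves implicit.
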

\begin{proof}
Indeed, let us compute the expected degree of $i$: 
$$
\E (d'(i)) = d(i)\left((1-\mu)\sum_{j\in C_{c(i)}} \frac{d(j)}{D(C_{c(i)})} + \mu \sum_{j \in V(G)} \frac{d(j)}{2m}\right) = d(i)\,.
$$
\end{proof}
So, this model solves the problem with expected degrees. Another advantage of ILFR is that it has only one parameter $\mu$ instead of $p_{in}$ and $p_{out}$ in planted partition models.

\section{Statistical inference}\label{sec:likelihood_optimization}
\subsection{Method}\label{sec:likelihood_method}

Having discussed possible options for the null models, we are ready to describe the statistical inference method. 
For community detection problem, this method consists of two ingredients: the evidence, expressed by a graph $G$, and a random graph null model with parameters $\btheta$ ($\btheta$ include community assignments and $\mu$ or $p_{in}$ and $p_{out}$ for the models under consideration). The goal is to find such parameters $\btheta$ that maximize the posterior distribution 
$$
\P(\btheta|G) = \frac{\P(G|\btheta) \P(\btheta)}{\P(G)}\,,
$$
where $\P(\btheta)$ is the prior distribution of parameters and
$
\P(G) = \int \P(G|\bar\btheta) \P(\bar\btheta) \, d \bar\btheta
$
is independent of $\btheta$, therefore 
$\argmax_{\btheta} \P(\btheta|G) = \argmax_{\btheta} \P(G|\btheta) \P(\btheta)\,.$
The choice of the prior distribution $\P(\btheta)$ is usually not obvious. Therefore, in the community detection literature, the likelihood $P(G|\btheta)$ that the  model is consistent with the observed graph structure is often maximized. 

Let us also stress that in Section~\ref{sec:null_models} we discussed two types of models: SBM/DCSBM having non-fixed number of parameters and PPM/DCPPM/ILFR with a fixed number of parameters. If the number of parameters is fixed, then the likelihood optimization can be applied directly. Otherwise, one needs another tool to find the optimal number of clusters, since direct likelihood maximization would lead to a trivial partition with all vertices forming their own clusters and the probability of an edge between two vertices equals 1 if they are connected and 0 otherwise. To avoid this problem, we further focus only on the models with fixed number of parameters. 

\subsection{Related Work on Statistical Inference}\label{sec:inference_related}

Let us briefly describe previous literature on statistical inference in community detection.
\citet{hastings2006community} considers PPM, as defined in Section~\ref{sec:SBM}. He shows a connection between the obtained likelihood and the Hamiltonian of a Potts model with short- and long-range
interactions. 
For $p_{in} > p_{out}$, the model is a spin glass with ferromagnetic nearest-neighbor interactions and antiferromagnetic long-range interactions. 
Belief propagation is used to find the ground state of the spin model. 
Another method based on the directed SBM is presented by~\citet{newman2007mixture}, they use the expectation-maximization technique to optimize the parameters.  
These ideas are close to the a posteriori blockmodel proposed in~\cite{nowicki2001estimation}. 
A variational approach to parameter estimation is adopted in~\citep{daudin2008mixture}. 
Some improvements and modifications of these approaches are introduced in, e.g.,~\citep{chen2015network,hofman2008bayesian,ramasco2008inversion,zanghi2008fast}. A generalization to the overlapping SBM is proposed by \citet{mcdaid2010detecting}.
As we discussed, DCSBM is defined in~\citep{karrer2011stochastic} and communities are detected based on the likelihood in this model. 
\citet{copic2009identifying} define an axiomatization for the problem of community detection based on PPM.
Likelihood is used as a quality function to define rankings between graph partitions and it is shown that such rankings satisfy a number of desired properties. 
The authors also propose an algorithm to find (approximately) the maximum likelihood partition.

Finally, if the number of parameters in the model is not fixed, some additional tools are required to figure out the optimal number of clusters. Here the average entropy of the classification~\citep{ramasco2008inversion}
or the Integrated Classification Likelihood~\citep{daudin2008mixture,zanghi2008fast} can be used.

\subsection{Likelihood for PPM and Simple Modularity}

The log-likelihood for PPM can be easily written:
\begin{multline*}
\log L'_{PPM}(\C,G,p_{in},p_{out}) = m_{in} \log p_{in} + m_{out} \log p_{out}  + \\ (P_{in}-m_{in})\log(1-p_{in}) 
 + (P_{out} - m_{out})\log(1-p_{out})\,.
\end{multline*}
Although it can be optimized directly using the algorithms discussed further in the paper, in order to demonstrate a connection of the log-likelihood to simple modularity defined in~\eqref{eq:simple_modularity} and to deal with all models in a unified way, we use the conventional trick and say that the number of edges between $i$ and $j$ follows a Poisson distribution with parameter $p_{in}$ or $p_{out}$.
Then we get:
\begin{multline*}
L_{PPM}(\C,G,p_{in},p_{out}) = 
\prod_{\substack{i,j: i < j,\\ c(i) = c(j)}} e^{-p_{in}} p_{in}^{e(i,j)} \\ \cdot
\prod_{\substack{i,j: i<j,\\ c(i) \neq c(j)}} e^{-p_{out}} p_{out}^{e(i,j)} 
= e^{-P_{in} p_{in}}  e^{-P_{out} p_{out}} p_{in}^{m_{in}} p_{out}^{m_{out}}\,,
\end{multline*} 
where $e(i,j)$ is the number of edges between $i$ and $j$, so
\begin{multline}\label{eq:likelihood_SBM}
\log L_{PPM}(\C,G,p_{in},p_{out}) = \\
m_{in} \log p_{in} + m_{out} \log p_{out}
 -P_{in} p_{in}  -P_{out} p_{out} \,.
\end{multline}
Note that the values $p_{in}$ and $p_{out}$ maximizing~\eqref{eq:likelihood_SBM} are
\begin{equation}\label{eq:opt_param_SBM}
p_{in} = \frac{m_{in}}{P_{in}}, \,\,\,\, p_{out} = \frac{m_{out}}{P_{out}}\,.
\end{equation}

In a resent paper by Newman~\citep{newman2016community} it was shown that if we assume $p_{in}$ and $p_{out}$ to be fixed, then maximizing~\eqref{eq:likelihood_SBM} is equivalent to maximizing simple modularity~\eqref{eq:simple_modularity}. Indeed,
\begin{multline*}
\log L_{PPM}(\C,G,p_{in},p_{out})  
 = m_{in} (\log p_{in} - \log p_{out}) + m \log p_{out} \\ - P_{in} (p_{in}  - p_{out}) -  P \, p_{out}
=  m \log p_{out} -  P \, p_{out} \\ + (\log p_{in} - \log p_{out})  \left( m_{in}  - P_{in} \frac{p_{in}  - p_{out}}{\log p_{in} - \log p_{out}}\right) \,.
\end{multline*}
So, we get the equivalence with 
\begin{equation}\label{eq:gamma_SBM}
\gamma = \frac{P(p_{in}  - p_{out})}{m(\log p_{in} - \log p_{out})}\,.
\end{equation}

\subsection{Likelihood for DCPPM and Modularity}\label{sec:likelihood_DCSBM}

Let us compute the log-likelihood for the DCPPM~\citep{newman2016community}:
\begin{multline}\label{eq:likelihood_DCSBM} 
\log L_{DCPPM}(\C,G,p_{in},p_{out}) \\ 
= - \frac 1 2 \sum_{\substack{i,j: \,c(i) = c(j)}} \frac{d(i)d(j)p_{in}}{2m} 
- \sum_{\substack{i,j:\, i< j,\\ c(i) \neq c(j)}} \frac{d(i)d(j)p_{out}}{2m} \\ 
+ \sum_{\substack{(i,j) \in E(G), \\i<j, c(i) = c(j)}} \log\frac{d(i)d(j)p_{in}}{2m} 
+ \sum_{\substack{(i,j) \in E(G), \\ i<j, c(i) \neq c(j)}} \log\frac{d(i)d(j)p_{out}}{2m} \\
 = m_{in} (\log p_{in} - \log p_{out})
 - \frac{p_{in} - p_{out}}{4m}  \sum_{C\in \C} D(C)^2
  \\ 
  + \sum_{i} d(i) \log d(i) 
 + m \log p_{out} 
 - m p_{out} - m \log (2m)\,.
\end{multline}
The values of parameters maximizing this likelihood are:
\begin{equation}\label{eq:opt_param_DCSBM}
p_{in} = \frac{4 \,m\,m_{in}}{\sum_C D(C)^2}\,,\,\,\,\, 
p_{out} = \frac{4 \,m \, m_{out}}{4 m^2 - \sum_{C}D(C)^2}\,.
\end{equation}

As shown in~\citep{newman2016community}, if $p_{in}$ and $p_{out}$ are fixed, then maximizing $\log L_{DCPPM}(\C,G,p_{in},p_{out})$ is exactly equivalent to maximizing modularity~\eqref{eq:modularity} with
\begin{equation}\label{eq:gamma_DCSBM}
\gamma = \frac{p_{in}  - p_{out}}{\log p_{in} - \log p_{out}}\,.
\end{equation}

\subsection{Likelihood for ILFR}


Let us compute the likelihood for the proposed ILFR model:
\begin{multline}\label{eq:likelihood_ILFR}
\log L_{ILFR}(\C,G,\mu) 
= - \frac 1 2 \sum_{\substack{i,j: c(i) = c(j)}} d(i)d(j)\left( \frac{(1-\mu)}{D(C_{c(i)})} + \frac{\mu}{2m} \right) 
\\ - \sum_{\substack{i,j: i< j,\\ c(i) \neq c(j)}} \frac{\mu d(i)d(j)}{2m} 
+ \sum_{\substack{(i,j) \in E(G), \\ i < j, c(i) = c(j)}} \log \left( d(i)d(j)\left( \frac{(1-\mu)}{D(C_{c(i)})} + \frac{\mu}{2m} \right) \right) \\
+ \sum_{\substack{(i,j) \in E(G), \\ i < j, c(i) \neq c(j)}} \log\frac{\mu d(i)d(j)}{2m} 
= 
 \sum_{C \in \C} \frac{D_{in}(C)}{2}\log \left( \frac{(1-\mu)}{D(C)}+ \frac{\mu}{2m} \right) \\
+  m_{out} \log \mu + \sum_i d(i) \log d(i) - m_{out} \log 2m - m \,,
\end{multline}
where $D_{in}(C) = \sum_{i \in C} d_{in}(i)$ is twice the number of edges induced by $C$. Note that the optimal value of $\mu$ is hard to find analytically, but it can be approximated numerically by optimizing~\eqref{eq:likelihood_ILFR}.

The following approximation helps to speed up the optimization algorithms and to make formulas more concise:
\begin{multline*}
\log \left( \frac{(1-\mu)}{D(C)}+ \frac{\mu}{2m} \right) = \log \frac{(1-\mu)}{D(C)} + \log\left( 1  + \frac{\mu\, D(C)}{(1-\mu)\,2\,m}\right) \\
\approx \log \frac{(1-\mu)}{D(C)} + \frac{\mu D(C)}{(1-\mu)\, 2\, m}
\approx \log \frac{(1-\mu)}{D(C)}\,,
\end{multline*}
since 
$
\left|\log \frac{1-\mu}{D(C)}\right| \gg \frac{\mu D(C)}{(1-\mu)\, 2\, m}\,.
$
This leads to another quality function, which we further refer to as ILFRS (S stands for ``simplified''):
\begin{multline}\label{eq:likelihood_ILFRS}
\log L_{ILFRS}(\C,G,\mu)  
=
m_{in} \log(1-\mu) +  m_{out} \log \mu  - m_{out} \log 2m \\
-  \sum_{C} \frac{D_{in}(C)}{2}\log D(C)    
 + \sum_i d(i) \log d(i)  -m \,.
\end{multline}
The optimal $\mu$ according to \eqref{eq:likelihood_ILFRS} can now be computed analytically:
\begin{equation}\label{eq:mu}
\mu = \frac{m_{out}}{m}\,.
\end{equation}

Note that now we can substitute $\mu$ in~\eqref{eq:likelihood_ILFRS} by the optimal value~\eqref{eq:mu} and obtain a non-parametric quality function: 
\begin{multline}\label{eq:ILFRS}
\log L_{ILFRS}(\C,G) = m_{in} \log\frac{m_{in}}{m} +  m_{out} \log \frac{m_{out}}{m}  -m  \\ - m_{out} \log 2m - \sum_{C} \frac{D_{in}(C)}{2}\log D(C)    
 + \sum_i d(i) \log d(i) \,.
\end{multline}
The obtained function is fairly simple and, as we show by further experiments, it can successfully replace the standard modularity function in many applications.

Let us also note that in contrast to ILFR, the likelihood for the standard LFR model cannot be computed. First, LFR is based on the configuration model, which introduces complex dependences between all edges, making the likelihood intractable. Second, and most importantly, the number of iter-community edges for each vertex is deterministic (the fraction of such edges is $\mu$); as a result, for most of graphs and partitions the likelihood is equal to zero.

\section{Optimization}\label{sec:optimization}

\subsection{Related Work on Modularity Optimization}\label{sec:optimization_related}

First, let us discuss some optimization approaches used in community detection. The most widely used measure to be optimized is modularity~\eqref{eq:modularity}. The following types of approaches are known.

\paragraph{Greedy optimization}
\citet{newman2004finding} proposed a greedy algorithm for modularity optimization, where the clusters are iteratively merged by choosing the pair which leads to a greater modularity increase (while it is positive). A speedup of the greedy algorithm was proposed by \citet{clauset2004finding}. Some other modifications were also suggested in, e.g., \citep{schuetz2008efficient,wakita2007finding}. Probably the most well-known and widely used greedy algorithm is called Louvain~\citep{blondel2008fast}. At the beginning each vertex forms its own cluster. Then we create the first level of the partition by iterating through all vertices: for each vertex $i$ we compute the gain in modularity coming from putting $i$ to the community of its neighbor and pick the community with the largest gain, as long as it is positive. After that the first level is formed and we replace the obtained communities by supervertices, and two supervertices are connected by a (weighted) edge if there is at least one edge between vertices of the corresponding communities. Then the process is repeated with the supervertices, and so on, as long as modularity increases. This algorithm was shown to be fast and provide partitions of good quality. We choose Louvain algorithm as the basis for our experiments, the detailed description of its application to our problem is given in Section~\ref{sec:alternatives}.

\paragraph{Simulated annealing}
Simulated annealing was first applied to modularity optimization in~\cite{guimera2005functional}. This method combines two types of moves: local moves, where a single vertex is shifted from one cluster to another; global moves, consisting of merges and splits of communities. Methods based on simulated annealing are relatively slow and cannot be applied to large datasets.

\paragraph{Spectral optimization}
Spectral methods are quite popular in modularity optimization. Such methods use the eigenvalues and eigenvectors of a so-called modularity matrix. 
Some examples can be found in~\citep{newman2006finding,sun2009improved}.

\vspace{5pt}
Many other algorithms exist, among them are methods based on extremal optimization~\citep{duch2005community}, mathematical programming~\citep{agarwal2008modularity}, mean field annealing~\citep{lehmann2007deterministic}, genetic algorithms~\citep{tasgin2007community}, etc.

\subsection{Proposed Likelihood Optimization Methods}\label{sec:alternatives}

In Section~\ref{sec:likelihood_optimization}, we presented four quality functions based on the likelihood: PPM~\eqref{eq:likelihood_SBM}, DCPPM~\eqref{eq:likelihood_DCSBM}, ILFR~\eqref{eq:likelihood_ILFR}, and ILFRS~\eqref{eq:likelihood_ILFRS}. All functions are parametric: PPM and DCPPM have parameters $p_{in}$ and $p_{out}$, ILFR and ILFRS have one parameter $\mu$. In this section, we discuss the possible ways to maximize these likelihoods. Below we assume that we are given an arbitrary algorithm $FindPartition(\bar Q, G)$ which is able to find a partition by maximizing some quality function~$\bar{Q}(\C,G)$. 

Although the optimization strategies proposed below are able to work with any algorithm $FindPartition$, let us first discuss our particular choice. As we already mentioned, we choose Louvain algorithm~\citep{blondel2008fast}, which is arguably the most widely used method for community detection. Louvain is fast and allows to process large datasets. Importantly, it can be adapted to all quality functions discussed in this paper. Initially the method is designed to optimize modularity, so it can be directly applied to the likelihood of DCPPM due to their equivalence. To adapt Louvain to other likelihoods, we have to change one part of the original algorithm: when we try to remove a vertex (or supervertex) from one community an add to another, we need to efficiently compute the difference in likelihood. This can be done for any of the considered quality functions: for ILFR and ILFRS the differences can be computed using~\eqref{eq:likelihood_ILFR} and~\eqref{eq:likelihood_ILFRS}; 
for PPM we have to additionally store the sizes of supervertices in order to compute the difference for $P_{in}$ in~\eqref{eq:simple_modularity}.
The rest of the Louvain algorithm remains the same.\footnote{To make our results reproducible, we made the source code publicly available at \url{https://github.com/altsoph/community_loglike}.} Therefore, the asymptotic time complexity of our Louvain-based $FindPartition$ and of the original Louvain are the same, up to a constant multiplier (the complexity of Louvain is empirically evaluated to be $O(n\log n)$).

Note that both the original Louvain and our modifications can be applied to the likelihoods with fixed values of parameters $p_{in}$, $p_{out}$ or $\mu$. Therefore, the key question is how to find the optimal values of these parameters.

\paragraph{Iterative strategy}  
The simplified version of this strategy was initially proposed in~\citep{newman2016community} for PPM and DCPPM models. Detailed description of our method is presented in Algorithm~\ref{alg:iterative}. Initially, we fix $\gamma=1$ or $\mu=0.5$ depending on the null model. Then, we apply \textit{FindPartition} to the corresponding log-likelihood function. Using the obtained partition we can re-estimate the parameters and continue this procedure until convergence or until some maximum allowed number of steps is reached. 
In our experiments we noticed that for any null model the parameters may end up cyclically varying near some value. Therefore, at each iteration we additionally check if we obtain an already seen parameter and stop in this case. This additional trick allows to significantly reduce the overall number of iterations. In our experiments we observed that we never need more than 50 iterations and the mean value is much smaller, see Section~\ref{sec:experiments} for more details.
\begin{algorithm}
\SetKwInOut{Input}{input}\SetKwInOut{Output}{output}
	\Input{\,\,\,graph $G$, $Model$, algorithm \textit{FindPartition}, $N$, $\varepsilon$ }
    \BlankLine
    Initialize $param = 1$ ($\gamma$ for PPM, DCPPM) or $param = 0.5$ ($\mu$ for ILFR, ILFRS), $Params = \emptyset$\;
    \For{$i \leftarrow 1$ \KwTo $N$}{
        Define $\bar{Q}$ according to~\eqref{eq:simple_modularity} for PPM, \eqref{eq:modularity} for DCPPM, \eqref{eq:likelihood_ILFR} for ILFR, \eqref{eq:likelihood_ILFRS} for ILFRS with parameter $param$\;
        $\C = FindPartition(\bar{Q},G)$\;
        \If{$Model$ is PPM or DCPPM}
        {Compute $p_{in}$ and $p_{out}$ according to~\eqref{eq:opt_param_SBM} or~\eqref{eq:opt_param_DCSBM}\;
        Compute $param_{new}$ according to~\eqref{eq:gamma_SBM} or~\eqref{eq:gamma_DCSBM}\;}
        \If{$Model$ is ILFRS}
        {Compute $param_{new}$ according to~\eqref{eq:mu}\;}
        \If{$Model$ is ILFR}
        {Compute $param_{new}$ by optimizing~\eqref{eq:likelihood_ILFR}\;}
         \If{$|param - param_{new}| < \varepsilon$ or $param_{new} \in Params$}
    {break\;}
     Add $param_{new}$ to $Params$\;
     $param \leftarrow param_{new}$\;
    }
     \Return{$\C$}
	{\caption{Iterative strategy}
    \label{alg:iterative}}
\end{algorithm}
\paragraph{Maximization strategy.} An alternative strategy is to directly search for parameters which maximize likelihood of the obtained partition. Here we can use any black-box optimization method.
At each iteration of optimization, for some value of $\gamma$ or $\mu$, we run $FindPartition$ to obtain a partition $\C$, then, using $\C$, we find the optimal parameters according to~\eqref{eq:opt_param_SBM}, \eqref{eq:opt_param_DCSBM},~\eqref{eq:mu} or by optimizing~\eqref{eq:likelihood_ILFR}. Finally, we compute the likelihood to be maximized using~\eqref{eq:likelihood_SBM}, \eqref{eq:likelihood_DCSBM}, \eqref{eq:likelihood_ILFR}, or \eqref{eq:likelihood_ILFRS}.
In our experiments, for simplicity and reproducibility, we use grid search in this strategy. 
However, any other method of black-box parameter optimization can be used instead, e.g., random search~\cite{bergstra2012random},
Bayesian optimization~\cite{snoek2015scalable}, Gaussian processes~\cite{Golovin2017Googlevizier}, sequential model-based optimization~\cite{bergstra2011algorithms,hutter2011sequential}, and so on.

To summarize, the maximization strategy directly maximizes the obtained likelihood, while the iterative one searches for a ``stable'' parameter, i.e., the one which does not change after applying $FindPartition$. Note that the convergence of the iterative strategy is not guaranteed for real-world networks. Therefore, initially we expected the maximization strategy to be more stable. However, our experiments show that these two strategies demonstrate similar performance, but the iterative one is faster.

Let us mention another possible optimization strategy, which we do not use in our experiments. Given a partition $\C$, for each likelihood function except ILFR, we can compute the optimal values of parameters according to~\eqref{eq:opt_param_SBM}, \eqref{eq:opt_param_DCSBM}, and~\eqref{eq:mu}. Therefore, for the corresponding models we can substitute these parameters and obtain non-parametric likelihoods, as we did in~\eqref{eq:ILFRS} for ILFRS. 
Namely, for PPM we can replace $p_{in}$ and $p_{out}$ in~\eqref{eq:likelihood_SBM} by~\eqref{eq:opt_param_SBM} and obtain
$$
\log L_{PPM}(\C,G) = 
m_{in} \log \frac{m_{in}}{P_{in}} + m_{out} \log \frac{m_{out}}{P_{out}}
 - m \,.
$$
For DCPPM, we replace $p_{in}$ and $p_{out}$ in~\eqref{eq:likelihood_DCSBM} by~\eqref{eq:opt_param_DCSBM} and get
\begin{multline*}
\log L_{DCPPM}(\C,G)
 = m_{in} \log \frac{m_{in} \left(4 m^2 - \sum_{C}D(C)^2 \right)}{m_{out}\sum_C D(C)^2}  \\
 - \left(m_{in} -  \frac{m_{out}\sum_{C} D(C)^2}{4 m^2 - \sum_{C}D(C)^2} \right) 
  + \sum_{i} d(i) \log d(i) \\
 + m \log \frac{4 \,m \, m_{out}}{4 m^2 - \sum_{C}D(C)^2}
- \frac{4 \,m^2 \, m_{out}}{4 m^2 - \sum_{C}D(C)^2} - m \log (2m).
\end{multline*}
All obtained non-parametric quality functions can potentially be optimized directly since they do not have free parameters.
However, we do not consider such strategy since: 1) it cannot be applied to ILFR as there is no analytical formula for optimal $\mu$, 2) this strategy cannot be easily combined with the Louvain algorithm. 
The reason is that Louvain has several partition levels and partitions obtained on earlier levels cannot be changed later. At the beginning of the algorithm our estimates of $\mu$, $p_{in}$, and $p_{out}$ are far from optimal: all vertices form their own communities, so $\mu = 1$, $p_{out} = 1$, and $p_{in} = 0$. As a result, the first level of the partition, which has a big impact on the final quality, is constructed based on non-optimal parameters. In fact, in most cases the algorithm does not even start optimization due to the huge overestimate of $\mu$ or $p_{out}$.

\section{Experiments}\label{sec:experiments}

In this section, we conduct an extensive experimental study to compare the discussed models. We start with some preliminaries and discuss evaluation metrics and datasets. In Section~\ref{sec:compare_loglike}, we measure how well the models describe various real-world networks. Next, we compare all proposed community detection algorithms. Finally,  Section~\ref{sec:negative} presents some negative results on the limits of statistical inference algorithms when applied to real-world networks. 

\subsection{Evaluation Metrics}\label{sec:metrics}

In order to evaluate the performance of any algorithm, we have to compare the partition $\C$ obtained by this algorithm  with the ground truth partition $\C_{GT}$. The problem of choosing a good similarity measure for this comparison does not have any standard solution in community detection literature. Different similarity measures can possibly give preferences to different algorithms. 
Usually, a suitable measure is chosen according to a practical problem at hand. That is why in this work we compute and compare several standard similarity measures. 

In particular, we use Normalized Mutual Information (NMI) of two partitions $\C$ and $\C_{GT}$, which is often used for the comparison of community detection algorithms~\cite{bagrow2008evaluating,fortunato2010community}. The idea behind NMI is that if two partitions are similar, one needs very little information to infer $\C_{GT}$ given $\C$. Assume that cluster labels for $\C_{GT}$ and $\C$ are values of two random variables $\xi_{GT}$ and $\xi$. Then, 
$$
\mathrm{NMI}(\C_{GT},\C) = \frac{2\,I(\xi_{GT},\xi)}{H(\xi_{GT})+H(\xi)}, 
$$
where $I(\xi_{GT},\xi) = H(\xi) - H(\xi_{GT}|\xi)$ is the mutual information of $\xi_{GT}$ and $\xi$, 
$H(\xi)$ is the Shannon entropy of $\xi$,  
$H(\xi_{GT}|\xi)$ is the conditional
entropy of $\xi_{GT}$ given $\xi$ . 

We also use two well-known similarity measures based on counting correctly and incorrectly classified pairs of vertices. Let $n_{11}$ denote the number of pairs of vertices which are in the same community in both partitions $\C$ and $\C_{GT}$, $n_{01}$ ($n_{10}$) the number of pairs which are in the same community in $\C$ ($\C_{GT}$) and in different communities in $\C_{GT}$ ($\C$), and $n_{00}$ the number of pairs that are in different communities in both partitions. By combining the introduced values one can obtain several similarity measures. We use two most popular ones, the first is Rand index~\citep{rand1971objective}: $\frac{n_{11}+n_{00}}{n_{11}+n_{10}+n_{01}+n_{00}}$, i.e., the fraction of the number of correctly classified pairs of vertices to the total number of pairs. One problem of the Rand index is that its value is usually close to 1, since $n_{00}$ is typically very high. The well-known Jaccard index does not suffer from this problem, it is defined as 
$\frac{n_{11}}{n_{11}+n_{10}+n_{01}}$, i.e.,
the fraction of the number of vertex pairs classified in the same cluster in both partitions to the number of pairs classified in the same cluster in at least one partition.
\subsection{Datasets}\label{sec:datasets}

\subsubsection{Synthetic networks}\label{sec:synthetic}
We use the LFR model described in Section~\ref{sec:LFR}
to generate synthetic networks. The parameters of the generated graphs are the following: the number of vertices $n = 10^4$;
the parameter of the power-law degree distribution $\gamma = 2.5$; the average degree $\bar d = 30$; the parameter of the community size distribution $\beta = 1.5$, with the minimum cluster size 50 and the maximum 600; the mixing parameter $\hat\mu$  is varied in range $[0,1]$.

We additionally experimented with synthetic graphs of different edge densities obtained by varying the average degree. The results are omitted since the only difference observed is that the qualities of all algorithms are usually larger for denser graphs. Finally, note that there are other benchmark models proposed in the literature. However, community detection in synthetic networks is not the main focus in this paper, therefore we do not consider such models.


\urldef\snapurl\url{http://snap.stanford.edu/data/index.html#communities}

\begin{small}
\begin{table}
  \caption{Real-world datasets; $\mu$, $\gamma_0$, and $\gamma_1$ are computed for ground truth partitions according to~(\ref{eq:mu}), (\ref{eq:opt_param_SBM}, \ref{eq:gamma_SBM}) and (\ref{eq:opt_param_DCSBM}, \ref{eq:gamma_DCSBM})}
  \label{tab:datasets}
  \begin{tabular}{lcccccc}
    \toprule
Dataset&$n$&$m$&$k$&$\mu$&$\gamma_1$&$\gamma_0$\\
    \midrule
    Karate club~\citep{zachary1977information} 
    	& 34 & 78 & 2 & 0.128 & 0.78 & 0.78 \\
    Dolphins~\citep{lusseau2003bottlenose} 
    	& 62 & 159 & 2 & 0.038 & 0.54 & 0.55 \\
	Football~\citep{newman2004finding} 
    	& 115 & 613 & 11 & 0.325 & 2.39 & 2.57 \\
        	Political books~\citep{newman2006modularity} & 105 & 441 & 3 & 0.159 & 0.86 & 0.89  \\
	Political blogs~\citep{adamic2005political} 
    	& 1224 & 16715 & 2 & 0.094 & 0.72 & 0.72 \\
    email-Eu-core~\citep{leskovec2007graph} & 986 & 16064 & 42  & 0.664 & 2.74 & 2.80 \\
    Cora citation~\citep{vsubelj2013model} & 24166 & 89157 & 70 & 0.458 & 5.46 & 6.21 \\
AS~\citep{boguna2010sustaining} & 23752 & 58416 &   176 & 0.561 & 1.15 & 1.40 \\
  \bottomrule
\end{tabular}
\end{table}
\end{small}

\subsubsection{Real-world networks}\label{sec:real}

We collected several networks with different structural properties (see Table~\ref{tab:datasets}). 
In addition to widely used networks, such as Zachary's karate club, dolphin social network, and American college football, we also used annotated books about politics\footnote{V. Krebs, unpublished, \url{http://www.orgnet.com/}} and political blogs~\citep{adamic2005political}. 
The dataset email-Eu-core is obtained from SNAP,\footnote{\url{http://snap.stanford.edu/data/email-Eu-core.html}} here the ground truth communities correspond to the departments of a research institute. In Cora citation dataset communities correspond to subjects of research papers, while in  AS~\citep{boguna2010sustaining} the vertices are annotated with their countries.
These are all publicly available datasets we found for non-overlapping community detection.\footnote{It is easier to find datasets with ground truth overlapping communities, see, e.g., \snapurl} 

\begin{small}
\begin{table}
  \caption{Log-likelihoods: datasets with ground truth}
  \label{tab:likelihoods_gt}
  \begin{tabular}{lccc}
    \toprule
    Dataset&$\log L_{PPM}$&$\log L_{DCPPM}$&$\log L_{ILFR}$\\
    \midrule
    Karate & -206.12 &	-168.65 &	{\bf-168.63}\\
    Dolphins & -483.50 & -439.52 &	{\bf-428.64}\\
	Football & {\bf-1384.1} & -1426.7 & -1428.4 \\
    	Political books& -1363.8 & {\bf-1235.0} & -1243.3 \\
	Political blogs&-73912 & -50756 & {\bf-50750} \\
    Eu-core&-65559 & -48783 & {\bf-48483} \\
    Cora&-678306 & -593358 & {\bf-584730} \\
    AS&-542952 & -351537 & \bf{-329784} \\
  \bottomrule
\end{tabular}
\end{table}
\end{small}

\subsection{Comparison of Likelihoods}\label{sec:compare_loglike}

Since the main aim of this paper is to analyze and compare several null models having community structure, we start with the following research question: \textit{Which model gives the best fit for real-world complex networks}? To answer this question, we compare the probabilities that real-world networks were generated by each of the null models. Namely, we compared the log-likelihoods~\eqref{eq:likelihood_SBM} for PPM, ~\eqref{eq:likelihood_DCSBM} for DCPPM and~\eqref{eq:likelihood_ILFR} for ILFR.\footnote{We intentionally do not consider synthetic networks in this experiment: likelihoods are expected to be heavily affected by the particular synthetic model used. For example, LFR benchmark could give a preference to ILFR null model.} Note that~\eqref{eq:likelihood_ILFRS} is a simplified expression for~\eqref{eq:likelihood_ILFR}, so we consider it only as a quality function but not as log-likelihood. 

In the first experiment, we took the datasets described in Section~\ref{sec:real} and assumed that the partitions are defined by the ground truth cluster assignments provided for these datasets. Then, we computed the optimal parameters $p_{in}$, $p_{out}$ or $\mu$ and used them to compute the corresponding log-likelihood (see Table~\ref{tab:likelihoods_gt}). One can see that PPM is the best model describing the Football dataset, DCPPM is the best for Pol-books, while for all other datasets ILFR has the largest likelihood. Note that ILFR has only one parameter to be tuned while DCPPM has two. Therefore, we initially expected that for many datasets DCPPM may have a larger likelihood. In this case, to decide which model better describes the data, we would have to adapt some information criterion (e.g., the Bayesian information criterion) to our problem, which is a nontrivial task for the models under consideration. Surprisingly, for most of the datasets ILFR has a larger likelihood, which clearly indicates that this model is more suitable for describing real-world networks.\footnote{Note that using these results we cannot compare ILFR and DCPPM with PPM since ILFR and DCPPM are based on the observed degree sequences, while PPM only uses the cluster assignments.}

\begin{small}
\begin{table}
  \caption{Log-likelihoods: datasets without ground truth}
  \label{tab:likelihoods}
  \begin{tabular}{lcccc}
    \toprule
    Dataset&n&$\log L_{PPM}$&$\log L_{DCPPM}$&$\log L_{ILFR}$\\
    \midrule
    Karate &34& -191.182 &	-163.990 &	{\bf-160.154}\\
    Dolphins &62& -417.240 &-398.718 &	{\bf -394.538}\\
	Football &115& {\bf-1364.38} & -1407.43 & -1406.79 \\
    Political books&105& -1182.90 & -1135.57 & \bf-1088.77 \\
	Political blogs&1224&-60653.8 & -49912.3 & {\bf -49702.3} \\   
    Eu-core&986&-57421.1 & -46020.9 & {\bf-45469.2} \\
    Cora&24K&-512556 & -450154 & {\bf-425463} \\
    AS&24K& -449088 & -244745 & \bf{-227917} \\
    Ca-GrQc &5242& -50258.2 & -53393.9 & \bf-42259.3 \\
    Ego-Facebook &4039&	-241708 & -234311 & \bf-207910 \\
    p2p-Gnutella09 &8114& -167995 & -148594 & \bf-144711 \\
    Wiki-vote &7115& -504395 & \bf-388044 & -388380 \\
    Email-Enron &37K & -1192893 & -862315 & \bf-803161 \\
    Soc-Epinions1 &76K& -2825138 & -2087117 & \bf-2000906 \\
    Soc-Slashdot0811 &77K& -4011399 & -3120963 & \bf-2949833 \\
    ego-Twitter &81K& -6979664 & -5781209 & \bf-5314782 \\
  \bottomrule
\end{tabular}
\end{table}
\end{small}

However, even though for each dataset under consideration we have a ground truth partition (a.k.a. attributes or metadata), this partition can be not an ideal division of vertices into communities and also can be not the only ground truth partition possible. For example, many complex networks have hierarchical community structure: e.g., for Cora dataset we can use original attributes (\textit{/Artificial\_Intelligence/Machine\_Learning/Probabilistic\_Methods/}), or second-level  ones (\textit{/Artificial\_Intelligence/Machine\_Learning/}), or just first-level (\textit{Artificial\_Intelligence}).\footnote{By measuring the likelihood for all such partitions of Cora, we notices that original attributes provide the highest likelihood for all null models, therefore we further analyze only this partition.} We further discuss this problem in Section~\ref{sec:results_real}.

Fortunately, likelihoods can be compared even for datasets without any knowledge about community structure, which allows to compare null models on a much larger variety of datasets. 
In order to measure the likelihood for any graph, we first have to find a partition that maximizes this value (among all possible partitions), which is feasible only for very small datasets. However, the optimal partition (and the corresponding likelihood) can be approximately found by applying the corresponding maximization algorithm: PPM-max, DCPPM-max or ILFR-max. We performed such comparison on the datasets introduced in Section~\ref{sec:real} as well as on 8 new datasets of various nature downloaded from \url{http://snap.stanford.edu/data/index.html} (see Table~\ref{tab:likelihoods}). According to this experiment, 
for almost all datasets the largest likelihood is again obtained for ILFR model, despite it has fewer free parameters, which means that this model is the best one for describing real-world datasets, which is the main empirical result supporting the introduction of ILFR and the corresponding log-likelihood quality function. 

Let us note that the log-likelihoods presented in Table~\ref{tab:likelihoods} can be underestimated, since we cannot guarantee that our optimization algorithms find exactly the maximum likelihood partition. However, we believe that this does not introduce any bias into this experiment (i.e., does not change the conclusions) since the same optimization procedure is applied for all models. Also, note that in all cases the log-likelihoods in Table~\ref{tab:likelihoods_gt} are smaller than the corresponding ones in Table~\ref{tab:likelihoods}, which means that our optimization algorithms found partitions providing better likelihoods for the corresponding models than the ground truth ones, which is expected (since partitions were tuned).

\subsection{Community Detection Algorithms}\label{sec:compare_algorithms}

In this section, we compare all algorithms proposed in Section~\ref{sec:alternatives}. Note that we intentionally use only methods based on the likelihood optimization since this paper focuses on the analysis and comparison of null models.

\subsubsection{Synthetic networks}

\begin{figure}
\centering
  \centering
  \includegraphics[width=\linewidth]{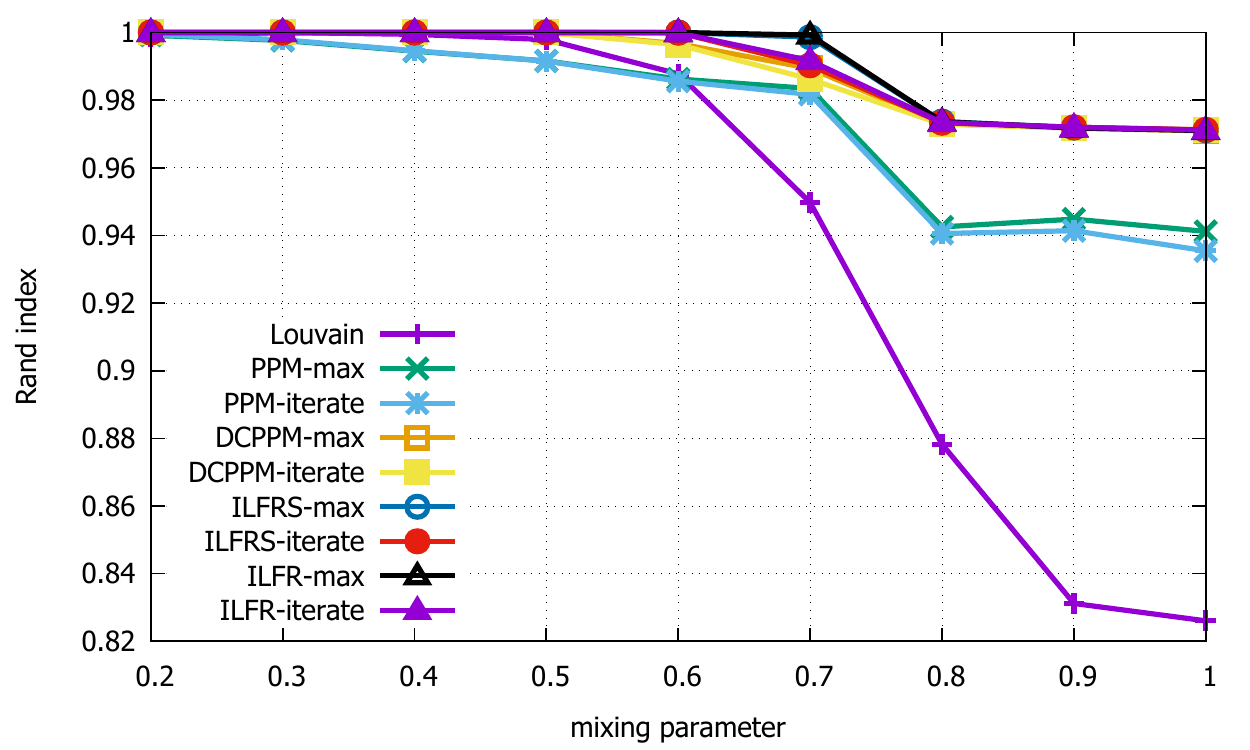}
  \includegraphics[width=\linewidth]{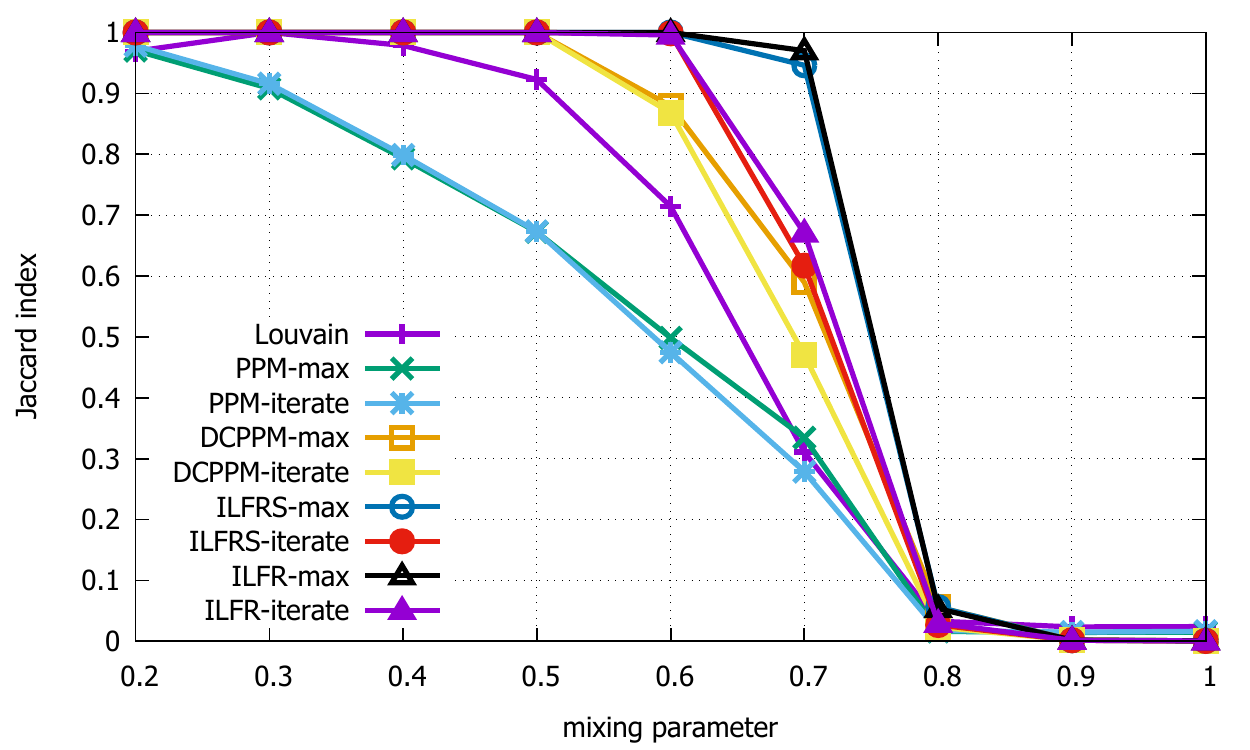}
  \includegraphics[width=\linewidth]{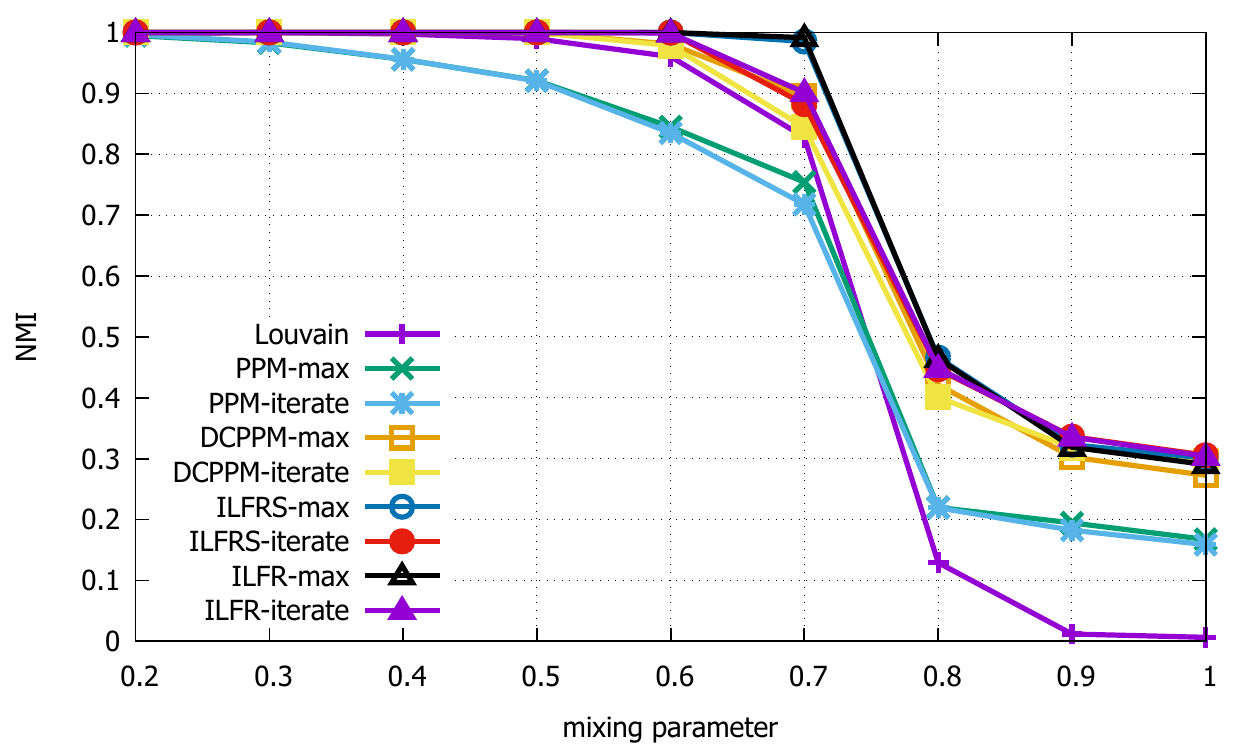}
  \caption{Comparison on synthetic networks}
  \label{fig:synthetic}
\end{figure}

\begin{table*}
  \caption{Maximization strategies (Rand index / Jaccard index / NMI) }
  \label{tab:results_real_max}
  \begin{tabular}{lccccc}
    \toprule
    Dataset&Louvain& PPM & DCPPM & ILFR & ILFRS \\
    \midrule
    Karate & 
    0.732 / 0.470 / 0.586 &
     0.707 / 0.397 / 0.585 &
     \textbf{0.766 / 0.523 / 0.667} &
     \textbf{0.774} / \textbf{0.535} / \textbf{0.687} &
     \textbf{0.774} / \textbf{0.535} / \textbf{0.687} \\
    Dolphins & 
    \textbf{0.634} / \textbf{0.351} / \textbf{0.499} & 
    0.529 / 0.153 / 0.386 &
    0.600 / 0.286 / 0.483 &
    0.594 / 0.274 / 0.472 &
    0.584 / 0.256 / 0.454 \\
	Football & 
    0.970 / 0.720 / 0.922 & 
    \textbf{0.993} / \textbf{0.916} / \textbf{0.972} & 
   \textbf{0.993} / \textbf{0.916} / \textbf{0.972} & 
    \textbf{0.993} / \textbf{0.916} / \textbf{0.972} & 
    \textbf{0.993} / \textbf{0.916} / \textbf{0.972} \\
    Pol-books & 
    0.831 / 0.616 / 0.545 & 
    0.722 / 0.331 / 0.449 & 
    0.817 / 0.583 / 0.536 &
    0.719 / 0.347 / 0.437 &
    0.772 / 0.477 / 0.490 \\
	Pol-blogs & 
    0.886 / 0.788 / 0.641 & 
    0.574 / 0.153 / 0.234 &
    \textbf{0.904} / \textbf{0.823} / \textbf{0.705} &
    0.860 / 0.738 / 0.570 &
    0.879 / 0.778 / 0.625 \\
    Eu-core & 
    0.869 / 0.218 / 0.584 & 
    0.942 / 0.207 / 0.661 &
    \textbf{0.964} / \textbf{0.434} / 0.740 & 
    0.953 / 0.371 / 0.727 &
    0.960 / 0.410 / 0.739  \\
    Cora & 
   0.943 / \textbf{0.127} / 0.460 & 
     \textbf{0.978} / 0.044 / 0.547& 
    0.978 / 0.050 / 0.533&
    0.978 / 0.063 / \textbf{0.551}&
    0.978 / 0.060 / \textbf{0.551}\\
   AS & 
   0.821 / \textbf{0.198} / \textbf{0.489} & 
   0.826 / 0.009 / 0.422 & 
   \textbf{0.826} / 0.026 / 0.461 &
    0.826 / 0.018 / 0.438 &
   0.826 / 0.018 / 0.438 \\
  \bottomrule
\end{tabular}
\end{table*}

\begin{table*}
  \caption{Iterative strategies (Rand index / Jaccard index / NMI)  }
  \label{tab:results_real_iterative}
  \begin{tabular}{lccccc}
    \toprule
    Dataset&Louvain& PPM & DCPPM & ILFR & ILFRS\\
    \midrule
    Karate & 
    0.732 / 0.470 / 0.586 & 
    0.743 / 0.480 / 0.612 & 
    0.750 / 0.502 / 0.612 & 
    0.766 / 0.523 / 0.667 &
    0.760 / 0.519 / 0.633 \\
    Dolphins & 
    \textbf{0.634} / \textbf{0.351} / 0.499 & 
    0.548 / 0.190 / 0.403 & 
    0.600 / 0.286 / 0.481 & 
    0.576 / 0.238 / 0.449 &
    0.575 / 0.238 / 0.448 \\
	Football & 
    0.970 / 0.720 / 0.922 & 
    \textbf{0.993} / \textbf{0.916} / \textbf{0.972} & 
     \textbf{0.993} / \textbf{0.916} / \textbf{0.972} & 
     \textbf{0.993} / \textbf{0.916} / \textbf{0.972} &
    \textbf{0.993} / \textbf{0.916} / \textbf{0.972} \\
    Pol-books & 
    \textbf{0.831} / \textbf{0.616} / \textbf{0.545} & 
    0.740 / 0.377 / 0.464 & 
    \textbf{0.843} / \textbf{0.651} / \textbf{0.554} & 
    0.760 / 0.445 / 0.481 &
    0.770 / 0.475 / 0.489  \\
    Pol-blogs & 
    0.886 / 0.788 / 0.641 & 
    0.582 / 0.171 / 0.240 & 
    \textbf{0.902} / \textbf{0.820} / \textbf{0.694} & 
    0.862 / 0.743 / 0.576 &
    0.877 / 0.773 / 0.618 \\
    Eu-core & 
    0.869 / 0.218 / 0.584 & 
    0.939 / 0.205 / 0.656 &
    \textbf{0.965} / \textbf{0.439} / \textbf{0.742} & 
    0.948 / 0.351 / 0.713 &
   0.955 / 0.387 / 0.725 \\
    Cora & 
    0.943 / \textbf{0.127} / 0.460 & 
    \textbf{0.978} / 0.049 / 0.547 & 
    0.978 / 0.052 / 0.534 & 
    0.978 / 0.058 / \textbf{0.549} &
    0.978 / 0.061 / \textbf{0.550} \\
    AS & 
    0.821 / \textbf{0.198} / \textbf{0.489} & 
    0.826 / 0.011 / 0.426 &   
    \textbf{0.826} / 0.025 / 0.460 & 
    0.826 / 0.017 / 0.436 &
    0.826 / 0.017 / 0.435 \\
  \bottomrule
\end{tabular}
\end{table*}

First, we compared all algorithms on synthetic networks described in Section~\ref{sec:synthetic}. For each $\hat \mu$ we generated 5 random samples of LFR and averaged the results.
Figure~\ref{fig:synthetic} shows the results obtained for all similarity measures under consideration (Rand, Jaccard and NMI). Note that in most cases these similarity measures are consistent, i.e., for a given mixing parameter they rank the algorithms similarly. However, Jaccard index is less descriptive for large mixing parameters.

Based on the obtained results, several important observations can be made.
For all values of $\hat\mu$, the best (or very close to the best) quality is obtained by ILFRS-max and ILFR-max strategies. 
For both ILFR and ILFRS it turns out that maximizing the likelihood is better than iteratively estimating the parameters, which supports our intuition that better partitions are expected to have higher likelihoods. 
For small values of $\hat\mu$ the worst results are obtained by PPM (both strategies), while for large values Louvain algorithm is the worst.

\begin{table*}
  \caption{Quality functions: ground truth vs optimization}
  \label{tab:compare_likelihood}
  \begin{tabular}{lcccccccccc}
    \toprule
   &
    \multicolumn{2}{c}{Modularity} &
    \multicolumn{2}{c}{$\log L_{PPM}$} &  
    \multicolumn{2}{c}{$\log L_{DCPPM}$} &
    \multicolumn{2}{c}{$\log L_{ILFRS}$} &
    \multicolumn{2}{c}{$\log L_{ILFR}$} \\
    \cmidrule{2-11}
    Dataset & GT & Louvain &
    GT & PPM-max &
    GT & DCPPM-max &
    GT & ILFRS-max &
    GT & ILFR-max\\
    \midrule
    Karate & 0.3715 & 0.4188 & -206 & -191 & -169 & -164 & -176 & -168 & -169 & -160 \\
    Dolphins & 0.3735 & 0.5233 & -483 & -417 & -439 & -399 & -434 & -405 & -429 & -395
     \\
	Football & 0.5877 & 0.6043 & -1384 & -1364 & -1427 & -1407 & -1447 & -1425 & -1428 & -1407
     \\
    Pol-books & 0.4149 & 0.5205 & -1364 & -1183 & -1235 & -1136 & -1285 & -1125 & -1243 & -1089
     \\
    Pol-blogs & 0.4053 & 0.4267 & -73913 & -60654 & -50756 & -49912 & -51843 & -50614 & -50750 & -49702
     \\
    Eu-core & 0.3138 & 0.4211 & -65559 & -57421 & -48783 & -46021 & -49073 & -46202 & -48483 & -45469
     \\
    Cora & 0.5165 & 0.7875 & -678306 & -512556 & -593358 & -450154 & -585837 & -425657 & -584730 & -425463
     \\
    AS & 0.1708 & 0.6307 & -542952 & -449088 & -351537 & -244745 & -338397 & -227192 & -329784 & -227917
     \\
  \bottomrule
\end{tabular}
\end{table*}

We also measured statistical significance of the obtained improvements: for each $\hat\mu$, we applied the algorithms to 5 random samples of LFR and used the paired t-test. For instance, it turns out that according to NMI ILFRS-max never looses significantly and it is significantly better (p-value < 0.05) than: DCPPM-max for $\hat\mu \in \{0.7,0.9,1\}$; DCPPM-iterate for $\hat\mu \in \{0.7,0.8\}$; PPM-max and PPM-iterate for all $\hat\mu \in [0.2,1]$; ILFR-iterate and ILFRS-iterate for $\hat\mu = 0.7$; and Louvain for $\hat\mu \in [0.4,1]$. These results additionally support the choice of the quality function $L_{ILFRS}$~\eqref{eq:ILFRS} for community detection instead of the standard modularity function.

\subsubsection{Real-world complex networks}\label{sec:results_real}
It is argued in several papers that ground truth community labels (a.k.a. metadata) available for some real-world networks should not be used to analyze and compare community detection algorithms~\citep{peel2017ground}. Indeed, such labels are usually obtained using some discrete-valued vertex attributes which are not guaranteed to be reasonable community assignments. For example, users in a social network can be split by gender, city, or interests, and each of these partitions could be treated as ground truth cluster assignments. An algorithm performing better for ``city'' labels can be worse for ``interests'' ones. As a result, no meaningful conclusion can be made based on such comparison.
However, for the sake of completeness, we compare all algorithms on real-world networks, since it is still a standard practice in community detection literature.

Tables~\ref{tab:results_real_max} and~\ref{tab:results_real_iterative} present the results for the iterative and maximization strategies, respectively (in both cases we add Louvain for comparison).\footnote{Recall that Louvain optimizes modularity, so it is similar to DCPPM but with fixed values of $p_{in}$ and $p_{out}$.} Note that we cannot properly measure statistical significance since we cannot sample several copies for each dataset. However, we can account for the randomness included in the algorithms: they all order vertices at each iteration randomly. To do this, for each dataset we run all algorithms 5 times and then apply the unpaired t-test to compare them. We put numbers in bold in Tables~\ref{tab:results_real_max} or~\ref{tab:results_real_iterative} if there is a group of algorithms without significant differences (p-value > 0.05) inside the group and with significant differences with the rest of the algorithms.

The results in general are not consistent: although DCPPM is often the best, the winning model may depend on a graph under consideration and on a target metric. For example, Louvain wins on Cora and AS according to Jaccard index, but it is the worst on the same datasets according to Rand index. 
This supports the claim that ground truth community assignments can be noisy or irrelevant. 

We also noticed that iterative and maximization strategies usually have similar performance and the choice between them is not straightforward. For example, for ILFRS it is often better to apply maximization strategy, while for PPM iterative one should be preferred. Based on this observation, we propose using the iterative strategy, which is faster (it converges after a small number of iteration, as discussed below).

The performance of ILFRS is in general slightly better than of ILFR. Taking into account our results on synthetic networks, we propose using the faster ILFRS instead of ILFR in all practical applications.

Finally, we analyzed the speed of convergence for iterative strategies on real-world datasets. Namely, we measured the number of iterations made by Algorithm~\ref{alg:iterative} before some stopping criteria is reached. Note that the algorithm never stopped because of reaching the maximum allowed number of iterations $N=50$, so we either observed a convergence or applied the cycling criteria (i.e., encountered an already seed parameter).
The average number of iterations obtained for PPM is 12.6 (in all cases the algorithm converged). For DCPPM we got 6.2 (converged in all cases except AS, where we applied cycling stopping criteria after 18 iterations).
For ILFRS we have 5.4 iterations (cycling criteria is applied to 3 datasets).
For ILFR we got 4.4 (again, with 3 applications of cycling criteria).  

\subsection{Statistical Inference, Negative Result}\label{sec:negative}

Having noticed unstable results for real-world datasets, we tried to answer the following research question: \textit{Is any of the null models suitable for community detection in real-world graphs?} 

In order to answer this question, for each quality function we compare its value for the ground truth partition with its value for the partition obtained by the corresponding maximization algorithm (see Table~\ref{tab:compare_likelihood}). For all quality functions, including the widely used modularity, and for all datasets the ground truth partition has a lower value of the quality function. This means that further optimization of \textit{any} quality function would not lead us towards the ground truth partition, which is a negative observation. In particular, there is no hope in improving the results obtained by our algorithms by replacing Louvain-based $FindPartition$ with some more effective maximization algorithm. 

Note that in the literature it is often assumed that a partition with larger modularity is better and, as a result, an algorithm which leads to a partition with larger modularity is better. However, our observation above demonstrates that on the considered real-world datasets it is not the case. 
We also performed additional experiments and noticed that in almost all cases the value of a quality function for the ground truth partition is lower than for the partition obtained by \textit{any} discussed optimization algorithm (not necessary optimizing the same quality function), which is an even stronger negative observation. 

The following conclusion can be made: either the ground truth metadata contained in the considered real-world networks is not a good descriptor for a community structure or statistical inference algorithms based on all null models discussed in this paper are unable to detect real-world communities.

\vspace{-2pt}

\section{Conclusion}\label{sec:conclusion}

In this paper, we focused on the comparison of null models which can be used by likelihood optimization algorithms for community detection. We compared two well-known models, PPM and DCPPM, and a new model, ILFR, which has only one parameter and is proved to preserve the desired expected degree sequence. 
For the new model we have written the log-likelihood, both in parametric and self-contained forms. 
To maximize the parametric log-likelihood functions, we proposed and compared two optimization strategies: maximization and iterative.

The most important conclusion is that the proposed model, ILFR, is the best one for describing most of the considered real-world complex networks according to the likelihood of the observed graph structures, despite the fact that it has only one free parameter. We argue that the likelihood can be considered as the main argument in evaluating the null models instead of the direct comparison of community detection algorithms. The reason is that one cannot fully rely on ground truth cluster labels available for real-world networks. 
Also, we demonstrated that ILFR-based algorithms have the best performance on synthetic networks.
Based on the obtained results, we believe that the proposed ILFR-based quality function~\eqref{eq:ILFRS} can  be successfully used as a target for optimization algorithms, instead of the widely adopted modularity. 


\balance

A natural direction for the future research is to analyze null models for overlapping community detection. This would be useful since many observed networks have overlapping communities. However, fundamental analysis of this problem is complicated by the fact that null models with overlapping communities are less developed and more complex for mathematical analysis.

\vspace{-2pt}




\begin{acks}
This study was funded by RFBR according to the research project 18-31-00207.
\end{acks}

\bibliographystyle{ACM-Reference-Format}
\bibliography{community_detection,hyperparameters} 


\begin{thebibliography}{60}


\ifx \showCODEN    \undefined \def \showCODEN     #1{\unskip}     \fi
\ifx \showDOI      \undefined \def \showDOI       #1{#1}\fi
\ifx \showISBNx    \undefined \def \showISBNx     #1{\unskip}     \fi
\ifx \showISBNxiii \undefined \def \showISBNxiii  #1{\unskip}     \fi
\ifx \showISSN     \undefined \def \showISSN      #1{\unskip}     \fi
\ifx \showLCCN     \undefined \def \showLCCN      #1{\unskip}     \fi
\ifx \shownote     \undefined \def \shownote      #1{#1}          \fi
\ifx \showarticletitle \undefined \def \showarticletitle #1{#1}   \fi
\ifx \showURL      \undefined \def \showURL       {\relax}        \fi
\providecommand\bibfield[2]{#2}
\providecommand\bibinfo[2]{#2}
\providecommand\natexlab[1]{#1}
\providecommand\showeprint[2][]{arXiv:#2}

\bibitem[\protect\citeauthoryear{Adamic and Glance}{Adamic and Glance}{2005}]%
        {adamic2005political}
\bibfield{author}{\bibinfo{person}{Lada~A Adamic} {and}
  \bibinfo{person}{Natalie Glance}.} \bibinfo{year}{2005}\natexlab{}.
\newblock \showarticletitle{The political blogosphere and the 2004 US election:
  divided they blog}. In \bibinfo{booktitle}{\emph{Proceedings of the 3rd
  international workshop on Link discovery}}. ACM, \bibinfo{pages}{36--43}.
\newblock


\bibitem[\protect\citeauthoryear{Agarwal and Kempe}{Agarwal and Kempe}{2008}]%
        {agarwal2008modularity}
\bibfield{author}{\bibinfo{person}{Gaurav Agarwal} {and} \bibinfo{person}{David
  Kempe}.} \bibinfo{year}{2008}\natexlab{}.
\newblock \showarticletitle{Modularity-maximizing graph communities via
  mathematical programming}.
\newblock \bibinfo{journal}{\emph{The European Physical Journal B}}
  \bibinfo{volume}{66}, \bibinfo{number}{3} (\bibinfo{year}{2008}),
  \bibinfo{pages}{409--418}.
\newblock


\bibitem[\protect\citeauthoryear{Aicher, Jacobs, and Clauset}{Aicher
  et~al\mbox{.}}{2014}]%
        {aicher2014learning}
\bibfield{author}{\bibinfo{person}{Christopher Aicher},
  \bibinfo{person}{Abigail~Z Jacobs}, {and} \bibinfo{person}{Aaron Clauset}.}
  \bibinfo{year}{2014}\natexlab{}.
\newblock \showarticletitle{Learning latent block structure in weighted
  networks}.
\newblock \bibinfo{journal}{\emph{Journal of Complex Networks}}
  \bibinfo{volume}{3}, \bibinfo{number}{2} (\bibinfo{year}{2014}),
  \bibinfo{pages}{221--248}.
\newblock


\bibitem[\protect\citeauthoryear{Bagrow}{Bagrow}{2008}]%
        {bagrow2008evaluating}
\bibfield{author}{\bibinfo{person}{James~P Bagrow}.}
  \bibinfo{year}{2008}\natexlab{}.
\newblock \showarticletitle{Evaluating local community methods in networks}.
\newblock \bibinfo{journal}{\emph{Journal of Statistical Mechanics: Theory and
  Experiment}} \bibinfo{volume}{2008}, \bibinfo{number}{05}
  (\bibinfo{year}{2008}), \bibinfo{pages}{P05001}.
\newblock


\bibitem[\protect\citeauthoryear{Bergstra and Bengio}{Bergstra and
  Bengio}{2012}]%
        {bergstra2012random}
\bibfield{author}{\bibinfo{person}{James Bergstra} {and}
  \bibinfo{person}{Yoshua Bengio}.} \bibinfo{year}{2012}\natexlab{}.
\newblock \showarticletitle{Random search for hyper-parameter optimization}.
\newblock \bibinfo{journal}{\emph{Journal of Machine Learning Research}}
  \bibinfo{volume}{13}, \bibinfo{number}{Feb} (\bibinfo{year}{2012}),
  \bibinfo{pages}{281--305}.
\newblock


\bibitem[\protect\citeauthoryear{Bergstra, Bardenet, Bengio, and
  K{\'e}gl}{Bergstra et~al\mbox{.}}{2011}]%
        {bergstra2011algorithms}
\bibfield{author}{\bibinfo{person}{James~S Bergstra}, \bibinfo{person}{R{\'e}mi
  Bardenet}, \bibinfo{person}{Yoshua Bengio}, {and} \bibinfo{person}{Bal{\'a}zs
  K{\'e}gl}.} \bibinfo{year}{2011}\natexlab{}.
\newblock \showarticletitle{Algorithms for hyper-parameter optimization}. In
  \bibinfo{booktitle}{\emph{Advances in neural information processing
  systems}}. \bibinfo{pages}{2546--2554}.
\newblock


\bibitem[\protect\citeauthoryear{Bickel and Chen}{Bickel and Chen}{2009}]%
        {bickel2009nonparametric}
\bibfield{author}{\bibinfo{person}{Peter~J Bickel} {and} \bibinfo{person}{Aiyou
  Chen}.} \bibinfo{year}{2009}\natexlab{}.
\newblock \showarticletitle{A nonparametric view of network models and
  Newman--Girvan and other modularities}.
\newblock \bibinfo{journal}{\emph{Proceedings of the National Academy of
  Sciences}} \bibinfo{volume}{106}, \bibinfo{number}{50}
  (\bibinfo{year}{2009}), \bibinfo{pages}{21068--21073}.
\newblock


\bibitem[\protect\citeauthoryear{Blondel, Guillaume, Lambiotte, and
  Lefebvre}{Blondel et~al\mbox{.}}{2008}]%
        {blondel2008fast}
\bibfield{author}{\bibinfo{person}{Vincent~D Blondel},
  \bibinfo{person}{Jean-Loup Guillaume}, \bibinfo{person}{Renaud Lambiotte},
  {and} \bibinfo{person}{Etienne Lefebvre}.} \bibinfo{year}{2008}\natexlab{}.
\newblock \showarticletitle{Fast unfolding of communities in large networks}.
\newblock \bibinfo{journal}{\emph{Journal of statistical mechanics: theory and
  experiment}} \bibinfo{volume}{2008}, \bibinfo{number}{10}
  (\bibinfo{year}{2008}), \bibinfo{pages}{P10008}.
\newblock


\bibitem[\protect\citeauthoryear{Bogun{\'a}, Papadopoulos, and
  Krioukov}{Bogun{\'a} et~al\mbox{.}}{2010}]%
        {boguna2010sustaining}
\bibfield{author}{\bibinfo{person}{Mari{\'a}n Bogun{\'a}},
  \bibinfo{person}{Fragkiskos Papadopoulos}, {and} \bibinfo{person}{Dmitri
  Krioukov}.} \bibinfo{year}{2010}\natexlab{}.
\newblock \showarticletitle{Sustaining the internet with hyperbolic mapping}.
\newblock \bibinfo{journal}{\emph{Nature communications}}  \bibinfo{volume}{1}
  (\bibinfo{year}{2010}), \bibinfo{pages}{62}.
\newblock


\bibitem[\protect\citeauthoryear{Chakraborty, Dalmia, Mukherjee, and
  Ganguly}{Chakraborty et~al\mbox{.}}{2017}]%
        {chakraborty2017metrics}
\bibfield{author}{\bibinfo{person}{Tanmoy Chakraborty}, \bibinfo{person}{Ayushi
  Dalmia}, \bibinfo{person}{Animesh Mukherjee}, {and} \bibinfo{person}{Niloy
  Ganguly}.} \bibinfo{year}{2017}\natexlab{}.
\newblock \showarticletitle{Metrics for community analysis: A survey}.
\newblock \bibinfo{journal}{\emph{ACM Computing Surveys (CSUR)}}
  \bibinfo{volume}{50}, \bibinfo{number}{4} (\bibinfo{year}{2017}),
  \bibinfo{pages}{54}.
\newblock


\bibitem[\protect\citeauthoryear{Chen, Wang, Xiang, Tang, and Bu}{Chen
  et~al\mbox{.}}{2015}]%
        {chen2015network}
\bibfield{author}{\bibinfo{person}{Yi Chen}, \bibinfo{person}{XL Wang},
  \bibinfo{person}{Xin Xiang}, \bibinfo{person}{BZ Tang}, {and}
  \bibinfo{person}{JZ Bu}.} \bibinfo{year}{2015}\natexlab{}.
\newblock \showarticletitle{Network structure exploration via Bayesian
  nonparametric models}.
\newblock \bibinfo{journal}{\emph{Journal of Statistical Mechanics: Theory and
  Experiment}} \bibinfo{volume}{2015}, \bibinfo{number}{10}
  (\bibinfo{year}{2015}), \bibinfo{pages}{P10004}.
\newblock


\bibitem[\protect\citeauthoryear{Chung and Lu}{Chung and Lu}{2002}]%
        {chung2002connected}
\bibfield{author}{\bibinfo{person}{Fan Chung} {and} \bibinfo{person}{Linyuan
  Lu}.} \bibinfo{year}{2002}\natexlab{}.
\newblock \showarticletitle{Connected components in random graphs with given
  expected degree sequences}.
\newblock \bibinfo{journal}{\emph{Annals of combinatorics}}
  \bibinfo{volume}{6}, \bibinfo{number}{2} (\bibinfo{year}{2002}),
  \bibinfo{pages}{125--145}.
\newblock


\bibitem[\protect\citeauthoryear{Clauset, Newman, and Moore}{Clauset
  et~al\mbox{.}}{2004}]%
        {clauset2004finding}
\bibfield{author}{\bibinfo{person}{Aaron Clauset}, \bibinfo{person}{Mark~EJ
  Newman}, {and} \bibinfo{person}{Cristopher Moore}.}
  \bibinfo{year}{2004}\natexlab{}.
\newblock \showarticletitle{Finding community structure in very large
  networks}.
\newblock \bibinfo{journal}{\emph{Physical review E}} \bibinfo{volume}{70},
  \bibinfo{number}{6} (\bibinfo{year}{2004}), \bibinfo{pages}{066111}.
\newblock


\bibitem[\protect\citeauthoryear{Condon and Karp}{Condon and Karp}{2001}]%
        {condon2001algorithms}
\bibfield{author}{\bibinfo{person}{Anne Condon} {and}
  \bibinfo{person}{Richard~M Karp}.} \bibinfo{year}{2001}\natexlab{}.
\newblock \showarticletitle{Algorithms for graph partitioning on the planted
  partition model}.
\newblock \bibinfo{journal}{\emph{Random Structures and Algorithms}}
  \bibinfo{volume}{18}, \bibinfo{number}{2} (\bibinfo{year}{2001}),
  \bibinfo{pages}{116--140}.
\newblock


\bibitem[\protect\citeauthoryear{Copic, Jackson, and Kirman}{Copic
  et~al\mbox{.}}{2009}]%
        {copic2009identifying}
\bibfield{author}{\bibinfo{person}{Jernej Copic}, \bibinfo{person}{Matthew~O
  Jackson}, {and} \bibinfo{person}{Alan Kirman}.}
  \bibinfo{year}{2009}\natexlab{}.
\newblock \showarticletitle{Identifying community structures from network data
  via maximum likelihood methods}.
\newblock \bibinfo{journal}{\emph{The BE Journal of Theoretical Economics}}
  \bibinfo{volume}{9}, \bibinfo{number}{1} (\bibinfo{year}{2009}).
\newblock


\bibitem[\protect\citeauthoryear{Coscia, Giannotti, and Pedreschi}{Coscia
  et~al\mbox{.}}{2011}]%
        {coscia2011classification}
\bibfield{author}{\bibinfo{person}{Michele Coscia}, \bibinfo{person}{Fosca
  Giannotti}, {and} \bibinfo{person}{Dino Pedreschi}.}
  \bibinfo{year}{2011}\natexlab{}.
\newblock \showarticletitle{A classification for community discovery methods in
  complex networks}.
\newblock \bibinfo{journal}{\emph{Statistical Analysis and Data Mining: The ASA
  Data Science Journal}} \bibinfo{volume}{4}, \bibinfo{number}{5}
  (\bibinfo{year}{2011}), \bibinfo{pages}{512--546}.
\newblock


\bibitem[\protect\citeauthoryear{Daudin, Picard, and Robin}{Daudin
  et~al\mbox{.}}{2008}]%
        {daudin2008mixture}
\bibfield{author}{\bibinfo{person}{J-J Daudin}, \bibinfo{person}{Franck
  Picard}, {and} \bibinfo{person}{St{\'e}phane Robin}.}
  \bibinfo{year}{2008}\natexlab{}.
\newblock \showarticletitle{A mixture model for random graphs}.
\newblock \bibinfo{journal}{\emph{Statistics and computing}}
  \bibinfo{volume}{18}, \bibinfo{number}{2} (\bibinfo{year}{2008}),
  \bibinfo{pages}{173--183}.
\newblock


\bibitem[\protect\citeauthoryear{Duch and Arenas}{Duch and Arenas}{2005}]%
        {duch2005community}
\bibfield{author}{\bibinfo{person}{Jordi Duch} {and} \bibinfo{person}{Alex
  Arenas}.} \bibinfo{year}{2005}\natexlab{}.
\newblock \showarticletitle{Community detection in complex networks using
  extremal optimization}.
\newblock \bibinfo{journal}{\emph{Physical review E}} \bibinfo{volume}{72},
  \bibinfo{number}{2} (\bibinfo{year}{2005}), \bibinfo{pages}{027104}.
\newblock


\bibitem[\protect\citeauthoryear{Fortunato}{Fortunato}{2010}]%
        {fortunato2010community}
\bibfield{author}{\bibinfo{person}{Santo Fortunato}.}
  \bibinfo{year}{2010}\natexlab{}.
\newblock \showarticletitle{Community detection in graphs}.
\newblock \bibinfo{journal}{\emph{Physics reports}} \bibinfo{volume}{486},
  \bibinfo{number}{3} (\bibinfo{year}{2010}), \bibinfo{pages}{75--174}.
\newblock


\bibitem[\protect\citeauthoryear{Fortunato and Barth{\'e}lemy}{Fortunato and
  Barth{\'e}lemy}{2007}]%
        {fortunato2007resolution}
\bibfield{author}{\bibinfo{person}{Santo Fortunato} {and} \bibinfo{person}{Marc
  Barth{\'e}lemy}.} \bibinfo{year}{2007}\natexlab{}.
\newblock \showarticletitle{Resolution limit in community detection}.
\newblock \bibinfo{journal}{\emph{Proceedings of the National Academy of
  Sciences}} \bibinfo{volume}{104}, \bibinfo{number}{1} (\bibinfo{year}{2007}),
  \bibinfo{pages}{36--41}.
\newblock


\bibitem[\protect\citeauthoryear{Fortunato and Hric}{Fortunato and
  Hric}{2016}]%
        {fortunato2016community}
\bibfield{author}{\bibinfo{person}{Santo Fortunato} {and}
  \bibinfo{person}{Darko Hric}.} \bibinfo{year}{2016}\natexlab{}.
\newblock \showarticletitle{Community detection in networks: A user guide}.
\newblock \bibinfo{journal}{\emph{Physics Reports}}  \bibinfo{volume}{659}
  (\bibinfo{year}{2016}), \bibinfo{pages}{1--44}.
\newblock


\bibitem[\protect\citeauthoryear{Golovin, Solnik, Moitra, Kochanski, Karro, and
  Sculley}{Golovin et~al\mbox{.}}{2017}]%
        {Golovin2017Googlevizier}
\bibfield{author}{\bibinfo{person}{D. Golovin}, \bibinfo{person}{B. Solnik},
  \bibinfo{person}{S. Moitra}, \bibinfo{person}{G. Kochanski},
  \bibinfo{person}{J. Karro}, {and} \bibinfo{person}{D. Sculley}.}
  \bibinfo{year}{2017}\natexlab{}.
\newblock \showarticletitle{Google vizier: A service for black-box
  optimization}. In \bibinfo{booktitle}{\emph{International Conference on
  Knowledge Discovery and Data Mining}}. ACM, \bibinfo{pages}{1487--1495}.
\newblock


\bibitem[\protect\citeauthoryear{Guimera and Amaral}{Guimera and
  Amaral}{2005}]%
        {guimera2005functional}
\bibfield{author}{\bibinfo{person}{Roger Guimera} {and} \bibinfo{person}{Luis
  A~Nunes Amaral}.} \bibinfo{year}{2005}\natexlab{}.
\newblock \showarticletitle{Functional cartography of complex metabolic
  networks}.
\newblock \bibinfo{journal}{\emph{nature}} \bibinfo{volume}{433(7028)},
  \bibinfo{number}{7028} (\bibinfo{year}{2005}), \bibinfo{pages}{895--900}.
\newblock


\bibitem[\protect\citeauthoryear{Hastings}{Hastings}{2006}]%
        {hastings2006community}
\bibfield{author}{\bibinfo{person}{Matthew~B Hastings}.}
  \bibinfo{year}{2006}\natexlab{}.
\newblock \showarticletitle{Community detection as an inference problem}.
\newblock \bibinfo{journal}{\emph{Physical Review E}} \bibinfo{volume}{74},
  \bibinfo{number}{3} (\bibinfo{year}{2006}), \bibinfo{pages}{035102}.
\newblock


\bibitem[\protect\citeauthoryear{Hofman and Wiggins}{Hofman and
  Wiggins}{2008}]%
        {hofman2008bayesian}
\bibfield{author}{\bibinfo{person}{Jake~M Hofman} {and}
  \bibinfo{person}{Chris~H Wiggins}.} \bibinfo{year}{2008}\natexlab{}.
\newblock \showarticletitle{Bayesian approach to network modularity}.
\newblock \bibinfo{journal}{\emph{Physical review letters}}
  \bibinfo{volume}{100}, \bibinfo{number}{25} (\bibinfo{year}{2008}),
  \bibinfo{pages}{258701}.
\newblock


\bibitem[\protect\citeauthoryear{Holland, Laskey, and Leinhardt}{Holland
  et~al\mbox{.}}{1983}]%
        {holland1983stochastic}
\bibfield{author}{\bibinfo{person}{Paul~W Holland},
  \bibinfo{person}{Kathryn~Blackmond Laskey}, {and} \bibinfo{person}{Samuel
  Leinhardt}.} \bibinfo{year}{1983}\natexlab{}.
\newblock \showarticletitle{Stochastic blockmodels: First steps}.
\newblock \bibinfo{journal}{\emph{Social networks}} \bibinfo{volume}{5},
  \bibinfo{number}{2} (\bibinfo{year}{1983}), \bibinfo{pages}{109--137}.
\newblock


\bibitem[\protect\citeauthoryear{Hutter, Hoos, and Leyton-Brown}{Hutter
  et~al\mbox{.}}{2011}]%
        {hutter2011sequential}
\bibfield{author}{\bibinfo{person}{Frank Hutter}, \bibinfo{person}{Holger~H
  Hoos}, {and} \bibinfo{person}{Kevin Leyton-Brown}.}
  \bibinfo{year}{2011}\natexlab{}.
\newblock \showarticletitle{Sequential model-based optimization for general
  algorithm configuration}. In \bibinfo{booktitle}{\emph{International
  Conference on Learning and Intelligent Optimization}}. Springer,
  \bibinfo{pages}{507--523}.
\newblock


\bibitem[\protect\citeauthoryear{Karrer and Newman}{Karrer and Newman}{2011}]%
        {karrer2011stochastic}
\bibfield{author}{\bibinfo{person}{Brian Karrer} {and} \bibinfo{person}{Mark~EJ
  Newman}.} \bibinfo{year}{2011}\natexlab{}.
\newblock \showarticletitle{Stochastic blockmodels and community structure in
  networks}.
\newblock \bibinfo{journal}{\emph{Physical Review E}} \bibinfo{volume}{83},
  \bibinfo{number}{1} (\bibinfo{year}{2011}), \bibinfo{pages}{016107}.
\newblock


\bibitem[\protect\citeauthoryear{Lancichinetti, Fortunato, and
  Radicchi}{Lancichinetti et~al\mbox{.}}{2008}]%
        {lancichinetti2008benchmark}
\bibfield{author}{\bibinfo{person}{Andrea Lancichinetti},
  \bibinfo{person}{Santo Fortunato}, {and} \bibinfo{person}{Filippo Radicchi}.}
  \bibinfo{year}{2008}\natexlab{}.
\newblock \showarticletitle{Benchmark graphs for testing community detection
  algorithms}.
\newblock \bibinfo{journal}{\emph{Physical review E}} \bibinfo{volume}{78},
  \bibinfo{number}{4} (\bibinfo{year}{2008}), \bibinfo{pages}{046110}.
\newblock


\bibitem[\protect\citeauthoryear{Lehmann and Hansen}{Lehmann and
  Hansen}{2007}]%
        {lehmann2007deterministic}
\bibfield{author}{\bibinfo{person}{Sune Lehmann} {and}
  \bibinfo{person}{Lars~Kai Hansen}.} \bibinfo{year}{2007}\natexlab{}.
\newblock \showarticletitle{Deterministic modularity optimization}.
\newblock \bibinfo{journal}{\emph{The European Physical Journal B-Condensed
  Matter and Complex Systems}} \bibinfo{volume}{60}, \bibinfo{number}{1}
  (\bibinfo{year}{2007}), \bibinfo{pages}{83--88}.
\newblock


\bibitem[\protect\citeauthoryear{Leskovec, Kleinberg, and Faloutsos}{Leskovec
  et~al\mbox{.}}{2007}]%
        {leskovec2007graph}
\bibfield{author}{\bibinfo{person}{Jure Leskovec}, \bibinfo{person}{Jon
  Kleinberg}, {and} \bibinfo{person}{Christos Faloutsos}.}
  \bibinfo{year}{2007}\natexlab{}.
\newblock \showarticletitle{Graph evolution: Densification and shrinking
  diameters}.
\newblock \bibinfo{journal}{\emph{ACM Transactions on Knowledge Discovery from
  Data (TKDD)}} \bibinfo{volume}{1}, \bibinfo{number}{1}
  (\bibinfo{year}{2007}), \bibinfo{pages}{2}.
\newblock


\bibitem[\protect\citeauthoryear{Lusseau, Schneider, Boisseau, Haase, Slooten,
  and Dawson}{Lusseau et~al\mbox{.}}{2003}]%
        {lusseau2003bottlenose}
\bibfield{author}{\bibinfo{person}{David Lusseau}, \bibinfo{person}{Karsten
  Schneider}, \bibinfo{person}{Oliver~J Boisseau}, \bibinfo{person}{Patti
  Haase}, \bibinfo{person}{Elisabeth Slooten}, {and} \bibinfo{person}{Steve~M
  Dawson}.} \bibinfo{year}{2003}\natexlab{}.
\newblock \showarticletitle{The bottlenose dolphin community of Doubtful Sound
  features a large proportion of long-lasting associations}.
\newblock \bibinfo{journal}{\emph{Behavioral Ecology and Sociobiology}}
  \bibinfo{volume}{54}, \bibinfo{number}{4} (\bibinfo{year}{2003}),
  \bibinfo{pages}{396--405}.
\newblock


\bibitem[\protect\citeauthoryear{Malliaros and Vazirgiannis}{Malliaros and
  Vazirgiannis}{2013}]%
        {malliaros2013clustering}
\bibfield{author}{\bibinfo{person}{Fragkiskos~D Malliaros} {and}
  \bibinfo{person}{Michalis Vazirgiannis}.} \bibinfo{year}{2013}\natexlab{}.
\newblock \showarticletitle{Clustering and community detection in directed
  networks: A survey}.
\newblock \bibinfo{journal}{\emph{Physics Reports}} \bibinfo{volume}{533},
  \bibinfo{number}{4} (\bibinfo{year}{2013}), \bibinfo{pages}{95--142}.
\newblock


\bibitem[\protect\citeauthoryear{McDaid and Hurley}{McDaid and Hurley}{2010}]%
        {mcdaid2010detecting}
\bibfield{author}{\bibinfo{person}{Aaron McDaid} {and} \bibinfo{person}{Neil
  Hurley}.} \bibinfo{year}{2010}\natexlab{}.
\newblock \showarticletitle{Detecting highly overlapping communities with
  model-based overlapping seed expansion}. In
  \bibinfo{booktitle}{\emph{Advances in Social Networks Analysis and Mining
  (ASONAM), 2010 International Conference on}}. IEEE,
  \bibinfo{pages}{112--119}.
\newblock


\bibitem[\protect\citeauthoryear{Molloy and Reed}{Molloy and Reed}{1995}]%
        {molloy1995critical}
\bibfield{author}{\bibinfo{person}{Michael Molloy} {and} \bibinfo{person}{Bruce
  Reed}.} \bibinfo{year}{1995}\natexlab{}.
\newblock \showarticletitle{A critical point for random graphs with a given
  degree sequence}.
\newblock \bibinfo{journal}{\emph{Random structures \& algorithms}}
  \bibinfo{volume}{6}, \bibinfo{number}{2-3} (\bibinfo{year}{1995}),
  \bibinfo{pages}{161--180}.
\newblock


\bibitem[\protect\citeauthoryear{Newman}{Newman}{2016}]%
        {newman2016community}
\bibfield{author}{\bibinfo{person}{MEJ Newman}.}
  \bibinfo{year}{2016}\natexlab{}.
\newblock \showarticletitle{Community detection in networks: Modularity
  optimization and maximum likelihood are equivalent}.
\newblock \bibinfo{journal}{\emph{arXiv preprint arXiv:1606.02319}}
  (\bibinfo{year}{2016}).
\newblock


\bibitem[\protect\citeauthoryear{Newman}{Newman}{2006a}]%
        {newman2006finding}
\bibfield{author}{\bibinfo{person}{Mark~EJ Newman}.}
  \bibinfo{year}{2006}\natexlab{a}.
\newblock \showarticletitle{Finding community structure in networks using the
  eigenvectors of matrices}.
\newblock \bibinfo{journal}{\emph{Physical review E}} \bibinfo{volume}{74},
  \bibinfo{number}{3} (\bibinfo{year}{2006}), \bibinfo{pages}{036104}.
\newblock


\bibitem[\protect\citeauthoryear{Newman}{Newman}{2006b}]%
        {newman2006modularity}
\bibfield{author}{\bibinfo{person}{Mark~EJ Newman}.}
  \bibinfo{year}{2006}\natexlab{b}.
\newblock \showarticletitle{Modularity and community structure in networks}.
\newblock \bibinfo{journal}{\emph{Proceedings of the national academy of
  sciences}} \bibinfo{volume}{103}, \bibinfo{number}{23}
  (\bibinfo{year}{2006}), \bibinfo{pages}{8577--8582}.
\newblock


\bibitem[\protect\citeauthoryear{Newman and Clauset}{Newman and
  Clauset}{2016}]%
        {newman2016structure}
\bibfield{author}{\bibinfo{person}{Mark~EJ Newman} {and} \bibinfo{person}{Aaron
  Clauset}.} \bibinfo{year}{2016}\natexlab{}.
\newblock \showarticletitle{Structure and inference in annotated networks}.
\newblock \bibinfo{journal}{\emph{Nature communications}}  \bibinfo{volume}{7}
  (\bibinfo{year}{2016}).
\newblock


\bibitem[\protect\citeauthoryear{Newman and Girvan}{Newman and Girvan}{2004}]%
        {newman2004finding}
\bibfield{author}{\bibinfo{person}{Mark~EJ Newman} {and}
  \bibinfo{person}{Michelle Girvan}.} \bibinfo{year}{2004}\natexlab{}.
\newblock \showarticletitle{Finding and evaluating community structure in
  networks}.
\newblock \bibinfo{journal}{\emph{Physical review E}} \bibinfo{volume}{69},
  \bibinfo{number}{2} (\bibinfo{year}{2004}), \bibinfo{pages}{026113}.
\newblock


\bibitem[\protect\citeauthoryear{Newman and Leicht}{Newman and Leicht}{2007}]%
        {newman2007mixture}
\bibfield{author}{\bibinfo{person}{Mark~EJ Newman} {and}
  \bibinfo{person}{Elizabeth~A Leicht}.} \bibinfo{year}{2007}\natexlab{}.
\newblock \showarticletitle{Mixture models and exploratory analysis in
  networks}.
\newblock \bibinfo{journal}{\emph{Proceedings of the National Academy of
  Sciences}} \bibinfo{volume}{104}, \bibinfo{number}{23}
  (\bibinfo{year}{2007}), \bibinfo{pages}{9564--9569}.
\newblock


\bibitem[\protect\citeauthoryear{Nowicki and Snijders}{Nowicki and
  Snijders}{2001}]%
        {nowicki2001estimation}
\bibfield{author}{\bibinfo{person}{Krzysztof Nowicki} {and}
  \bibinfo{person}{Tom A~B Snijders}.} \bibinfo{year}{2001}\natexlab{}.
\newblock \showarticletitle{Estimation and prediction for stochastic
  blockstructures}.
\newblock \bibinfo{journal}{\emph{J. Amer. Statist. Assoc.}}
  \bibinfo{volume}{96}, \bibinfo{number}{455} (\bibinfo{year}{2001}),
  \bibinfo{pages}{1077--1087}.
\newblock


\bibitem[\protect\citeauthoryear{Peel, Larremore, and Clauset}{Peel
  et~al\mbox{.}}{2017}]%
        {peel2017ground}
\bibfield{author}{\bibinfo{person}{Leto Peel}, \bibinfo{person}{Daniel~B
  Larremore}, {and} \bibinfo{person}{Aaron Clauset}.}
  \bibinfo{year}{2017}\natexlab{}.
\newblock \showarticletitle{The ground truth about metadata and community
  detection in networks}.
\newblock \bibinfo{journal}{\emph{Science advances}} \bibinfo{volume}{3},
  \bibinfo{number}{5} (\bibinfo{year}{2017}), \bibinfo{pages}{e1602548}.
\newblock


\bibitem[\protect\citeauthoryear{Peixoto}{Peixoto}{2015}]%
        {peixoto2015inferring}
\bibfield{author}{\bibinfo{person}{Tiago~P Peixoto}.}
  \bibinfo{year}{2015}\natexlab{}.
\newblock \showarticletitle{Inferring the mesoscale structure of layered,
  edge-valued, and time-varying networks}.
\newblock \bibinfo{journal}{\emph{Physical Review E}} \bibinfo{volume}{92},
  \bibinfo{number}{4} (\bibinfo{year}{2015}), \bibinfo{pages}{042807}.
\newblock


\bibitem[\protect\citeauthoryear{Ramasco and Mungan}{Ramasco and
  Mungan}{2008}]%
        {ramasco2008inversion}
\bibfield{author}{\bibinfo{person}{Jos{\'e}~J Ramasco} {and}
  \bibinfo{person}{Muhittin Mungan}.} \bibinfo{year}{2008}\natexlab{}.
\newblock \showarticletitle{Inversion method for content-based networks}.
\newblock \bibinfo{journal}{\emph{Physical Review E}} \bibinfo{volume}{77},
  \bibinfo{number}{3} (\bibinfo{year}{2008}), \bibinfo{pages}{036122}.
\newblock


\bibitem[\protect\citeauthoryear{Rand}{Rand}{1971}]%
        {rand1971objective}
\bibfield{author}{\bibinfo{person}{William~M Rand}.}
  \bibinfo{year}{1971}\natexlab{}.
\newblock \showarticletitle{Objective criteria for the evaluation of clustering
  methods}.
\newblock \bibinfo{journal}{\emph{Journal of the American Statistical
  association}} \bibinfo{volume}{66}, \bibinfo{number}{336}
  (\bibinfo{year}{1971}), \bibinfo{pages}{846--850}.
\newblock


\bibitem[\protect\citeauthoryear{Reichardt and Bornholdt}{Reichardt and
  Bornholdt}{2006}]%
        {reichardt2006statistical}
\bibfield{author}{\bibinfo{person}{J{\"o}rg Reichardt} {and}
  \bibinfo{person}{Stefan Bornholdt}.} \bibinfo{year}{2006}\natexlab{}.
\newblock \showarticletitle{Statistical mechanics of community detection}.
\newblock \bibinfo{journal}{\emph{Physical Review E}} \bibinfo{volume}{74},
  \bibinfo{number}{1} (\bibinfo{year}{2006}), \bibinfo{pages}{016110}.
\newblock


\bibitem[\protect\citeauthoryear{Rosvall and Bergstrom}{Rosvall and
  Bergstrom}{2008}]%
        {rosvall2008maps}
\bibfield{author}{\bibinfo{person}{Martin Rosvall} {and}
  \bibinfo{person}{Carl~T Bergstrom}.} \bibinfo{year}{2008}\natexlab{}.
\newblock \showarticletitle{Maps of random walks on complex networks reveal
  community structure}.
\newblock \bibinfo{journal}{\emph{Proceedings of the National Academy of
  Sciences}} \bibinfo{volume}{105}, \bibinfo{number}{4} (\bibinfo{year}{2008}),
  \bibinfo{pages}{1118--1123}.
\newblock


\bibitem[\protect\citeauthoryear{Schuetz and Caflisch}{Schuetz and
  Caflisch}{2008}]%
        {schuetz2008efficient}
\bibfield{author}{\bibinfo{person}{Philipp Schuetz} {and}
  \bibinfo{person}{Amedeo Caflisch}.} \bibinfo{year}{2008}\natexlab{}.
\newblock \showarticletitle{Efficient modularity optimization by multistep
  greedy algorithm and vertex mover refinement}.
\newblock \bibinfo{journal}{\emph{Physical Review E}} \bibinfo{volume}{77},
  \bibinfo{number}{4} (\bibinfo{year}{2008}), \bibinfo{pages}{046112}.
\newblock


\bibitem[\protect\citeauthoryear{Snoek, Rippel, Swersky, Kiros, Satish,
  Sundaram, Patwary, Prabhat, and Adams}{Snoek et~al\mbox{.}}{2015}]%
        {snoek2015scalable}
\bibfield{author}{\bibinfo{person}{Jasper Snoek}, \bibinfo{person}{Oren
  Rippel}, \bibinfo{person}{Kevin Swersky}, \bibinfo{person}{Ryan Kiros},
  \bibinfo{person}{Nadathur Satish}, \bibinfo{person}{Narayanan Sundaram},
  \bibinfo{person}{Mostofa Patwary}, \bibinfo{person}{Mr Prabhat}, {and}
  \bibinfo{person}{Ryan Adams}.} \bibinfo{year}{2015}\natexlab{}.
\newblock \showarticletitle{Scalable bayesian optimization using deep neural
  networks}. In \bibinfo{booktitle}{\emph{International Conference on Machine
  Learning}}. \bibinfo{pages}{2171--2180}.
\newblock


\bibitem[\protect\citeauthoryear{Spiliopoulou}{Spiliopoulou}{2011}]%
        {spiliopoulou2011evolution}
\bibfield{author}{\bibinfo{person}{Myra Spiliopoulou}.}
  \bibinfo{year}{2011}\natexlab{}.
\newblock \showarticletitle{Evolution in social networks: A survey}.
\newblock In \bibinfo{booktitle}{\emph{Social network data analytics}}.
  \bibinfo{publisher}{Springer}, \bibinfo{pages}{149--175}.
\newblock


\bibitem[\protect\citeauthoryear{{\v{S}}ubelj and Bajec}{{\v{S}}ubelj and
  Bajec}{2013}]%
        {vsubelj2013model}
\bibfield{author}{\bibinfo{person}{Lovro {\v{S}}ubelj} {and}
  \bibinfo{person}{Marko Bajec}.} \bibinfo{year}{2013}\natexlab{}.
\newblock \showarticletitle{Model of complex networks based on citation
  dynamics}. In \bibinfo{booktitle}{\emph{Proceedings of the 22nd international
  conference on World Wide Web}}. ACM, \bibinfo{pages}{527--530}.
\newblock


\bibitem[\protect\citeauthoryear{Sun, Danila, Josi{\'c}, and Bassler}{Sun
  et~al\mbox{.}}{2009}]%
        {sun2009improved}
\bibfield{author}{\bibinfo{person}{Yudong Sun}, \bibinfo{person}{Bogdan
  Danila}, \bibinfo{person}{K Josi{\'c}}, {and} \bibinfo{person}{Kevin~E
  Bassler}.} \bibinfo{year}{2009}\natexlab{}.
\newblock \showarticletitle{Improved community structure detection using a
  modified fine-tuning strategy}.
\newblock \bibinfo{journal}{\emph{EPL (Europhysics Letters)}}
  \bibinfo{volume}{86}, \bibinfo{number}{2} (\bibinfo{year}{2009}),
  \bibinfo{pages}{28004}.
\newblock


\bibitem[\protect\citeauthoryear{Tasgin, Herdagdelen, and Bingol}{Tasgin
  et~al\mbox{.}}{2007}]%
        {tasgin2007community}
\bibfield{author}{\bibinfo{person}{Mursel Tasgin}, \bibinfo{person}{Amac
  Herdagdelen}, {and} \bibinfo{person}{Haluk Bingol}.}
  \bibinfo{year}{2007}\natexlab{}.
\newblock \showarticletitle{Community detection in complex networks using
  genetic algorithms}.
\newblock \bibinfo{journal}{\emph{arXiv preprint arXiv:0711.0491}}
  (\bibinfo{year}{2007}).
\newblock


\bibitem[\protect\citeauthoryear{Von~Luxburg}{Von~Luxburg}{2007}]%
        {von2007tutorial}
\bibfield{author}{\bibinfo{person}{Ulrike Von~Luxburg}.}
  \bibinfo{year}{2007}\natexlab{}.
\newblock \showarticletitle{A tutorial on spectral clustering}.
\newblock \bibinfo{journal}{\emph{Statistics and computing}}
  \bibinfo{volume}{17}, \bibinfo{number}{4} (\bibinfo{year}{2007}),
  \bibinfo{pages}{395--416}.
\newblock


\bibitem[\protect\citeauthoryear{Wakita and Tsurumi}{Wakita and
  Tsurumi}{2007}]%
        {wakita2007finding}
\bibfield{author}{\bibinfo{person}{Ken Wakita} {and} \bibinfo{person}{Toshiyuki
  Tsurumi}.} \bibinfo{year}{2007}\natexlab{}.
\newblock \showarticletitle{Finding community structure in mega-scale social
  networks}. In \bibinfo{booktitle}{\emph{Proceedings of the 16th international
  conference on World Wide Web}}. ACM, \bibinfo{pages}{1275--1276}.
\newblock


\bibitem[\protect\citeauthoryear{Yang and Leskovec}{Yang and Leskovec}{2014}]%
        {yang2014structure}
\bibfield{author}{\bibinfo{person}{Jaewon Yang} {and} \bibinfo{person}{Jure
  Leskovec}.} \bibinfo{year}{2014}\natexlab{}.
\newblock \showarticletitle{Structure and overlaps of ground-truth communities
  in networks}.
\newblock \bibinfo{journal}{\emph{ACM Transactions on Intelligent Systems and
  Technology (TIST)}} \bibinfo{volume}{5}, \bibinfo{number}{2}
  (\bibinfo{year}{2014}), \bibinfo{pages}{26}.
\newblock


\bibitem[\protect\citeauthoryear{Yang and Leskovec}{Yang and Leskovec}{2015}]%
        {yang2015defining}
\bibfield{author}{\bibinfo{person}{Jaewon Yang} {and} \bibinfo{person}{Jure
  Leskovec}.} \bibinfo{year}{2015}\natexlab{}.
\newblock \showarticletitle{Defining and evaluating network communities based
  on ground-truth}.
\newblock \bibinfo{journal}{\emph{Knowledge and Information Systems}}
  \bibinfo{volume}{42}, \bibinfo{number}{1} (\bibinfo{year}{2015}),
  \bibinfo{pages}{181--213}.
\newblock


\bibitem[\protect\citeauthoryear{Zachary}{Zachary}{1977}]%
        {zachary1977information}
\bibfield{author}{\bibinfo{person}{Wayne~W Zachary}.}
  \bibinfo{year}{1977}\natexlab{}.
\newblock \showarticletitle{An information flow model for conflict and fission
  in small groups}.
\newblock \bibinfo{journal}{\emph{Journal of anthropological research}}
  \bibinfo{volume}{33}, \bibinfo{number}{4} (\bibinfo{year}{1977}),
  \bibinfo{pages}{452--473}.
\newblock


\bibitem[\protect\citeauthoryear{Zanghi, Ambroise, and Miele}{Zanghi
  et~al\mbox{.}}{2008}]%
        {zanghi2008fast}
\bibfield{author}{\bibinfo{person}{Hugo Zanghi}, \bibinfo{person}{Christophe
  Ambroise}, {and} \bibinfo{person}{Vincent Miele}.}
  \bibinfo{year}{2008}\natexlab{}.
\newblock \showarticletitle{Fast online graph clustering via
  Erd{\H{o}}s--R{\'e}nyi mixture}.
\newblock \bibinfo{journal}{\emph{Pattern recognition}} \bibinfo{volume}{41},
  \bibinfo{number}{12} (\bibinfo{year}{2008}), \bibinfo{pages}{3592--3599}.
\newblock


\end{thebibliography}

\end{document}